\documentclass[pra,twocolumn,amsmath,amssymb,showpacs, superscriptaddress,notitlepage,nofootinbib,onecolumn]{revtex4-2}

\usepackage{graphicx}
\usepackage[caption=false,labelformat=simple]{subfig}

\usepackage{siunitx,booktabs}
\sisetup{per-mode=symbol}
\sisetup{table-auto-round}

\usepackage{dcolumn}
\usepackage{amsmath,amssymb,mathtools,physics}
\usepackage[utf8]{inputenc}

\usepackage[hidelinks]{hyperref}
\usepackage{footnote}
\usepackage{enumitem}
\hypersetup{
	colorlinks   = true, 
	urlcolor     = blue, 
	linkcolor    = blue, 
	citecolor   = blue 
}
\usepackage[capitalize]{cleveref}

\usepackage [english]{babel}
\usepackage [autostyle, english = american]{csquotes} \MakeOuterQuote{"}

\usepackage[dvipsnames]{xcolor}

\usepackage{amsthm}

\newtheorem*{lemma}{Lemma}

\newcommand{\sdpcitations}{\cite{colesNumericalApproach2016,winickReliableNumerical2018,wangCharacterisingCorrelations2019,primaatmajaVersatileSecurity2019,huRobustInterior2022,araujoQuantumKey2023,zhouNumericalMethod2022}}

\newcommand{\sdpcitationsasymp}{\cite{colesNumericalApproach2016,winickReliableNumerical2018,wangCharacterisingCorrelations2019,primaatmajaVersatileSecurity2019,huRobustInterior2022,araujoQuantumKey2023}}

\newcommand{\affvqcc}{Vigo Quantum Communication Center, University of Vigo, Vigo E-{36310}, Spain}
\newcommand{\affuvigo}{Escuela de Ingeniería de Telecomunicación, Department of Signal Theory and Communications, University of Vigo, Vigo E-36310, Spain}
\newcommand{\affatlantic}{atlanTTic Research Center, University of Vigo, Vigo E-36310, Spain}

\newcommand{\affgeneva}{Group of Applied Physics, University of Geneva, 1205 Geneva, Switzerland}

\begin{document}
	
	\setlength{\parskip}{3pt}
	\setlength{\parindent}{0pt}
	\title{Numerical security analysis for quantum key distribution with partial state characterization}
	
	\author{Guillermo Currás-Lorenzo} \email{gcurras@vqcc.uvigo.es}
	\affiliation{\affvqcc} \affiliation{\affuvigo} \affiliation{\affatlantic} 
	\author{Álvaro Navarrete}
	\affiliation{\affvqcc} \affiliation{\affuvigo} \affiliation{\affatlantic} 
	\author{Javier Núñez-Bon}
	\affiliation{\affvqcc}
	\affiliation{\affuvigo}
	\affiliation{\affgeneva}
	\author{Margarida Pereira}
	\affiliation{\affvqcc} \affiliation{\affuvigo} \affiliation{\affatlantic} 
	\author{Marcos Curty}
	\affiliation{\affvqcc} \affiliation{\affuvigo} \affiliation{\affatlantic} 
	\quad
	
	\begin{abstract}
		Numerical security proofs offer a versatile approach for evaluating the secret-key generation rate of quantum key distribution (QKD) protocols. However, existing methods typically require perfect source characterization, which is unrealistic in practice due to the presence of inevitable encoding imperfections and side channels. In this paper, we introduce a novel security proof technique based on semidefinite programming that can evaluate the secret-key rate for both prepare-and-measure and measurement-device-independent QKD protocols when only partial information about the emitted states is available, significantly improving the applicability and practical relevance compared to existing numerical techniques. We demonstrate that our method can outperform current analytical approaches addressing partial state characterization in terms of achievable secret-key rates, particularly for protocols with non-qubit encoding spaces. This represents a significant step towards bridging the gap between theoretical security proofs and practical QKD implementations.
	\end{abstract}

	\maketitle

	\section{Introduction}
	
	In theory, quantum key distribution (QKD) enables two parties, Alice and Bob, to establish a shared secret key with information-theoretic security, ensuring its confidentiality even in the presence of an eavesdropper, Eve, with unlimited computational resources. However, in practice, the security of QKD implementations can be compromised due to discrepancies between the assumptions made in theoretical security proofs and the actual behavior of the devices. For example, standard security proofs of the BB84 protocol assume that Alice emits states drawn from two mutually unbiased qubit bases and that no setting-choice information is leaked from her lab. In practice, these ideal conditions are unattainable due to inevitable encoding imperfections and side channels. Rigorously accounting for such source imperfections is essential to ensure the security of practical QKD implementations \cite{ImplementationAttacksBSI}. 
	
	Over the past two decades, analytical security proofs have made significant progress in addressing this crucial problem \cite{gottesmanSecurityQuantum2004,loSecurityQuantum2007,tamakiLosstolerantQuantum2014,pereiraQuantumKey2019,pereiraModifiedBB842023,pereiraQuantumKey2020,curras-lorenzoSecurityFramework2023}. In particular, notably, Ref.~\cite{curras-lorenzoSecurityFramework2023} has shown that the finite-key security of QKD against coherent attacks can be guaranteed without requiring a detailed characterization of the emitted states. Namely, this analysis only needs the assumption that
	\begin{equation}	\label{eq:emitted_state_assumption}
		\abs{\braket{\phi_j}{\psi_{j}}}^{{2}} \geq 1 - \epsilon_j,
	\end{equation}
	where $\ket{\psi_j}$ is the actual state emitted by Alice when she selects setting $j$ (e.g., for BB84, $j \in \{0,1,+,-\}$), $\ket{\phi_j}$ is an arbitrary reference state, and $\epsilon_j$ bounds their deviation. This partial characterization requirement is very useful because certain imperfections, such as the states of side channels leaking information about Alice's setting $j$, could in principle live in arbitrarily-dimensional spaces, making it extremely challenging, if not impossible, to fully characterize them in practice. These imperfections can be incorporated into the analysis through the fidelity bound $\epsilon_j$, while easier-to-characterize imperfections, such as state preparation flaws (SPFs)---i.e., flaws within the encoding degree of freedom---can be incorporated into the definition of the reference states $\ket{\phi_j}$.
	
	On the other hand, numerical security analyses \sdpcitations{} based on convex optimization techniques have recently emerged as a flexible alternative to analytical proofs. By eliminating the need to redo the entire analysis for each new protocol variation, they enable quick and efficient evaluations across a broad range of QKD scenarios. Moreover, they can often yield optimal or near-optimal asymptotic key rates, even for scenarios lacking the symmetries that analytical proofs typically rely on. In principle, this flexibility should make them well-suited for evaluating the key-rate impact of various types of encoding imperfections---which often break such symmetries---and to assess the tightness of analytical results. Unfortunately, however, existing numerical analyses \sdpcitations{} require a complete description of the emitted states, including all imperfections. This unrealistic demand significantly limits their practical applicability.

	In this paper, we close this gap by introducing a novel security proof approach based on semidefinite programming (SDP)  that requires only partial knowledge of the emitted states, i.e., a bound of the form in \cref{eq:emitted_state_assumption}. Our approach can be applied to both prepare-and-measure (P\&M) and measurement-device-independent (MDI) protocols \cite{loMeasurementDeviceIndependentQuantum2012}, including decoy-state protocols \cite{loDecoyState2005,maPracticalDecoy2005,hwangQuantumKey2003,wangBeatingPhotonNumberSplitting2005}. Our numerical simulations show that, for standard BB84-like protocols, our approach only offers a very modest secret-key rate improvement with respect to the analytical result in \cite{curras-lorenzoSecurityFramework2023}, which suggests that the latter is close to optimal. However, for protocols with non-qubit encoding spaces, such as the coherent-light-based MDI protocol introduced in Ref.~\cite{navarretePracticalQuantum2021}, our method can yield substantially improved secret-key rates.

	\section{Main result}
	\label{sec:main_result}
	\label{sec:PM}
	
	{In this section, we present our security analysis  framework based on SDP. While our approach builds upon the foundation established by \cite{primaatmajaVersatileSecurity2019}, it introduces three key advances that significantly expand its applicability:
		
		\begin{enumerate}
			\item It is applicable to both P$\&$M and MDI-type protocols, whereas \cite{primaatmajaVersatileSecurity2019} is limited to MDI-type protocols.
			\item It requires only partial characterization of the source states (see condition in \cref{eq:emitted_state_assumption}), while \cite{primaatmajaVersatileSecurity2019} assumes full source characterization. 
			
			\item It can accommodate sources emitting mixtures of infinitely many pure states, enabling its application to decoy-state protocols \cite{loDecoyState2005,maPracticalDecoy2005,hwangQuantumKey2003,wangBeatingPhotonNumberSplitting2005}. In contrast, \cite{primaatmajaVersatileSecurity2019} is limited to discretely-modulated scenarios (i.e., scenarios in which the source emits a finite set of pure states), and therefore not applicable to standard decoy-state protocols. 
		\end{enumerate}
		
		Note that Point (2) is particularly critical, since demanding a full source characterization is extremely unrealistic, as highlighted in the Introduction. In the following, we demonstrate our analysis for discretely-modulated P$\&$M protocols; for extensions to MDI-type protocols and to non-discretely-modulated protocols, see \cref{app:MDI,app:decoy_state}, respectively. 
		
		\subsection*{Discretely-modulated P\&M protocols}
		
		In a discretely-modulated P$\&$M protocol, each round proceeds as follows: (1) Alice probabilistically selects a state from the set $\{\ket{\psi_j}_a
		\}_j$, where $j \in \{0,1,...\}$ refers to her setting choice; (2) she sends this state to Bob through an untrusted quantum channel; and (3) Bob performs a measurement on the incoming state. As the main focus of this work are source imperfections, for simplicity, we consider that Bob's measurement setup does not suffer from side channels and that it satisfies the basis-independent detection efficiency condition. That is, it can be described by two positive operator-valued measures (POVMs) with the same probability of outputting a conclusive bit outcome. Formally, we can express Bob's POVMs as 
		\begin{equation}
			\mathcal{Z} \eqqcolon \{\Gamma_{0_Z},\Gamma_{1_Z},\Gamma_{f}\}  \textrm{   and    } \mathcal{X} \eqqcolon \{\Gamma_{0_X},\Gamma_{1_X},\Gamma_{f}\},
		\end{equation}
		where $\Gamma_{b_Z}$ ($\Gamma_{b_X}$) is the operator associated to bit outcome $b \in \{0,1\}$ within POVM $\mathcal{Z}$ ($\mathcal{X}$) and $\Gamma_{f}$ is the operator associated to an inconclusive (i.e., failed) detection and is considered to be identical for both POVMs. While the basis-independent detection efficiency assumption simplifies our analysis, it could be relaxed by combining our approach with a recent result that allows for bounded detection efficiency mismatches \cite{tupkaryPhaseError2024}, or it could be eliminated entirely by considering an MDI-type protocol.

		For simplicity, we study the asymptotic regime and collective attacks. However, we remark that our proof could be lifted to general attacks by applying established techniques, as discussed in the Conclusion section. Eve's collective attack can be described as a quantum channel $\Lambda$ acting separately on each of Alice's emitted signals. Let $\{K_l\}_l$ be the set of Kraus operators of the operator-sum representation \cite{nielsenQuantumComputation2011} for the channel $\Lambda$. The probability $Y_{j}^{\gamma}$ that Bob obtains the measurement result $\gamma \in \{0_X, 1_X, f\}$ conditioned on him measuring $\mathcal{X}$ and Alice emitting the state $\ket{\psi_j}_a$ can be expressed as
		\begin{equation}
			\label{eq:Y_j_gamma}
			\begin{aligned}         
				Y_j^\gamma &= \Tr\big[\Lambda(\ketbra{\psi_j}_a) \Gamma_\gamma \big] = \Tr[\sum_{l}  K_l \ketbra{\psi_j}_a  K^{\dagger}_l \Gamma_\gamma] =  \sum_{l} \Tr[ K_l \ketbra{\psi_j}_a  K^{\dagger}_l \Gamma_\gamma ] \\
				&= \sum_{l} \Tr[\ketbra{\psi_j}_a  K^{\dagger}_l \Gamma_\gamma K_l ] = \Tr[\ketbra{\psi_j}_a \sum_{l}  K^{\dagger}_l \Gamma_\gamma K_l ] = \Tr[\ketbra{\psi_j}_a  \hat E_{\gamma} ] = \ev{\hat E_{\gamma}}{\psi_j}_a,
			\end{aligned}
		\end{equation}
		where in the third and fifth equalities we have used the linearity of the trace, in the fourth equality we have used the cyclic property of the trace, and in the second-to-last equality we have defined 
		\begin{equation}
			\hat E_{\gamma} = \sum_l K^{\dagger}_l \Gamma_\gamma K_l \geq 0.
		\end{equation}
		Note that
		\begin{equation}
			\label{eq:E_gamma_sum_less_I}
			\hat E_{0_X} + \hat E_{1_X} + \hat E_{f} =  \sum_l K^{\dagger}_l (\Gamma_{0_X} + \Gamma_{1_X} + \Gamma_{f}) K_l = \sum_l K^{\dagger}_l  K_l = \mathbb{I}.
		\end{equation}
		Therefore, we can regard $\mathcal{X}_{\rm eff} \coloneqq \{\hat E_{0_X}, \hat E_{1_X}, \hat E_{f}\}$ as a POVM that effectively combines both Eve's action and Bob's actual $\mathcal{X}$ measurement.
		
		For concreteness, let us assume that Alice and Bob obtain their sifted key from the detected events in which Alice emits $\ket{\psi_0}_a$ or $\ket{\psi_1}_a$ and Bob measures $\mathcal{Z}$. Moreover, let us assume that Alice emits these two states equiprobabilistically. Note that, in these rounds, Alice could have equivalently generated the entangled state\footnote{More generally, she could have generated the state
			\begin{equation}
				\label{eq:Psi_Z_def_alt}
				\ket{\Psi_Z}_{Aa} = \frac{1}{\sqrt 2}  \left(\ket{0}_A \ket{\psi_0}_a + e^{i \omega }\ket{1}_A \ket{\psi_1}_a  \right),
			\end{equation}
			for any $\omega \in [0,2\pi)$, where the phase $\omega$ is a free parameter of the analysis that affects the definition of the phase-error probability in \cref{eq:Y_ph_def_PM}. In this work, we have assumed $\omega = 0$, which is the optimal choice for all the scenarios that we have simulated. We remark, however, that in certain cases choosing a non-zero $\omega$ may be advantageous. This is the case, for instance, for BB84-type scenarios in which the emitted states are not in the \textit{XZ} plane \cite{curras-lorenzoFinitekeyAnalysis2021,curras-lorenzoSecurityFramework2023}. To apply our analysis for a non-zero $\omega$, one should simply update the definition of the phase-error probability in \cref{eq:Y_ph_def_PM,eq:Y_ph_def_PM_expanded} accordingly.}
		\begin{equation}
			\label{eq:Psi_Z_def}
			\ket{\Psi_Z}_{Aa} = \frac{1}{\sqrt 2}  \left(\ket{0}_A \ket{\psi_0}_a +\ket{1}_A \ket{\psi_1}_a  \right),
		\end{equation}
		and then measured her ancillary system $A$ in the computational basis {$\{\ket{0}_A,\ket{1}_A\}$}. The objective of our security proof is to estimate the phase-error rate $e_{\rm ph}$, which is defined as the error rate that Alice and Bob would have observed if, in these rounds, Alice had instead measured her ancillary system $A$ in the Hadamard basis $\{\ket{+}_A,\ket{-}_A\}$, where $\ket{\pm}_A = \frac{1}{\sqrt 2}(\ket{0}_A \pm \ket{1}_A)$, and Bob had instead measured $\mathcal{X}$. The asymptotic secret-key rate of the protocol can then be expressed as\footnote{Note that this formula assumes an efficient protocol in which the probability that Alice sends one of the states in $\{\ket{\psi_0},\ket{\psi_1}\}$ and the probability that Bob chooses the POVM $\mathcal{Z}$ both tend to one, which is optimal in the asymptotic regime we are considering.}
		\begin{equation}
			\label{eq:skr_PM}
			R = Y_Z \big[1-h(e_{\rm ph}) - f h(e_Z)\big],
		\end{equation}
		where $Y_Z$ is the observed rate at which Bob obtains a successful detection event conditioned on Alice preparing a state in $\{\ket{\psi_0}_a, \ket{\psi_1}_a\}$ and Bob measuring $\mathcal{Z}$; $e_Z$ is the observed error rate associated to these events; $h(x)=-x \log_2 x - (1-x) \log_2(1-x)$ is the binary entropy function; and $f$ is the error correction inefficiency. The phase-error rate $e_{\rm ph}$ can be expressed as
		\begin{equation}
			\label{eq:e_ph_def}
			e_{\rm ph} = \frac{Y_{Z \wedge \textrm{ph}}}{Y_Z},
		\end{equation}
		where $Y_{Z \wedge \textrm{ph}}$ denotes the probability that Bob obtains a successful detection event \textit{and} Alice and Bob obtain a phase error conditioned on Alice preparing the entangled state in \cref{eq:Psi_Z_def}. Note that, while $Y_Z$ is directly observed in the actual protocol, $Y_{Z \wedge \textrm{ph}}$ is not and must be indirectly estimated. We can write this latter term as
		\begin{equation}
			\label{eq:Y_ph_def_PM}
			\begin{aligned}
				Y_{Z \wedge \textrm{ph}} &= \bra{\Psi_Z} \big(\ketbra{+}_A \otimes \hat{E}_{1_X} + \ketbra{-}_A \otimes \hat{E}_{0_X} \big) \ket{\Psi_Z}_{Aa} \\
				&= \bra{\Psi_Z} \big(\ketbra{+}_A \otimes \hat{M}_{1_X}^\dagger \hat{M}_{1_X} + \ketbra{-}_A \otimes \hat{M}_{0_X}^\dagger \hat{M}_{0_X}  \big) \ket{\Psi_Z}_{Aa} \\
				&= \frac{1}{4} \bigg[\ev{\hat{M}_{1_X}^\dagger \hat{M}_{1_X}} {\psi_0}_a + \mel{\psi_0}{\hat{M}_{1_X}^\dagger \hat{M}_{1_X}} {\psi_1}_a + \mel{\psi_1}{\hat{M}_{1_X}^\dagger \hat{M}_{1_X}} {\psi_0}_a + \ev{\hat{M}_{1_X}^\dagger \hat{M}_{1_X}}{\psi_1}_a  \\ &+  \ev{\hat{M}_{0_X}^\dagger \hat{M}_{0_X}} {\psi_0}_a - \mel{\psi_0}{\hat{M}_{0_X}^\dagger \hat{M}_{0_X}} {\psi_1}_a - \mel{\psi_1}{\hat{M}_{0_X}^\dagger \hat{M}_{0_X}} {\psi_0}_a + \ev{\hat{M}_{0_X}^\dagger \hat{M}_{0_X}}{\psi_1}_a  \bigg].
			\end{aligned}
		\end{equation}
		
		In the second equality of \cref{eq:Y_ph_def_PM}, we have used the fact that, since $\hat E_\gamma \geq 0$, it can be decomposed as $\hat E_\gamma = \hat M_{\gamma}^\dagger \hat M_{\gamma}$ for some $\hat M_{\gamma}$; indeed, one such decomposition can be obtained simply by setting $\hat M_{\gamma} = \hat M_{\gamma}^\dagger = \sqrt{\hat E_\gamma}$. Also, in the third equality, we have substituted the definition of $\ket{\Psi_Z}_{Aa}$ in \cref{eq:Psi_Z_def} and applied some simple algebra.
		
		We now show how to obtain an upper bound on \cref{eq:Y_ph_def_PM} using SDP \cite{Boyd2004}. For clarity, we first consider the case of full state characterization, followed by the case of partial state characterization.

		\subsubsection*{SDP with full state characterization}
		
		In the case of full state characterization, the states emitted by Alice, $\{\ket{\psi_j}_a\}_j$, are completely known. Consequently, the inner products between any two of these states, $\braket{\psi_{j'}}{\psi_j}_a$, can be determined precisely. This allows us to express an upper bound on \cref{eq:Y_ph_def_PM} as the following optimization problem        %
		\begin{equation}
			\label{eq:sdp}
			\begin{aligned}
				&\max \textrm{ } Y_{Z\wedge\textrm{ph}}  \\
				&\textrm{s.t. } \ev{\hat M_{\gamma}^\dagger M_{\gamma}} {\psi_j}_a =  Y_{j}^{\gamma} \quad \forall j, \gamma,\\
				&\sum_\gamma  \mel{\psi_{j'}}{\hat M_{\gamma}^\dagger M_{\gamma}}{\psi_j}_a = \braket{\psi_{j'}}{\psi_j}_a \quad \forall j,j',
			\end{aligned}
		\end{equation}
		where $\gamma \in \{0_X,1_X,f\}$. Note that the first type of constraints in \cref{eq:sdp} holds due to \cref{eq:Y_j_gamma}, and the second type of constraints holds since (see \cref{eq:E_gamma_sum_less_I})
		\begin{equation}
			\label{eq:completeness_eq}
			{\sum_{\gamma \in \{0_X,1_X,f\}} \hat M_{\gamma}^\dagger M_{\gamma} = \sum_{\gamma \in \{0_X,1_X,f\}} \hat E_\gamma = \mathbb{I}.}
		\end{equation}
		
		Importantly, it is straightforward to show that \cref{eq:sdp} is an SDP. Let us define $G$ as the Gram matrix of the vector set $\{\hat M_{\gamma} \ket{\psi_j}_a\}_{j,\gamma}$, i.e., $G$ is a matrix whose elements are given by all the inner products between the vectors in this set. {Clearly, the objective function in \cref{eq:Y_ph_def_PM} and all the constraints in \cref{eq:sdp} are linear with respect to the elements of $G$. Moreover, since $G$ is a Gram matrix, it is necessarily positive semidefinite. Therefore, \cref{eq:sdp} satisfies the criteria of an SDP. That is, it has a linear objective function, linear constraints, and a positive semidefinite variable $G$. Consequently, we can obtain a rigorous upper bound on $Y_{Z\wedge\textrm{ph}}$ by solving its dual problem \cite{primaatmajaVersatileSecurity2019}.}

		\subsubsection*{SDP with partial state characterization}
		
		In this case, the states $\{\ket{\psi_j}_a\}_j$ are only partially characterized, that is, we only know that 
		\begin{equation}
			\label{eq:assumption}
			\abs{\braket{\phi_j}{\psi_j}_a}^2 \geq 1-\epsilon_j,
		\end{equation}
		for some $0<\epsilon_j<1$ and some known states $\{\ket{\phi_j}_a\}_j$. As shown in \cref{app:proof_emitted_states}, \cref{eq:assumption} implies that Alice's emitted states can be assumed to have the form 
		\begin{equation}
			\label{eq:psi_j_trick}
			\ket{\psi_j}_a = \sqrt{1-\epsilon_j} \ket{\phi_j}_a + \sqrt{\epsilon_j} \ket*{\phi_j^\perp}_a,
		\end{equation}
		where $\ket*{\phi_j^\perp}_a$ is a state orthogonal to $\ket*{\phi_j}_a$, i.e., it satisfies $\braket*{\phi_j^\perp}{\phi_j}_a = 0$. Using this, we define our optimization problem as
		\begin{equation}
			\label{eq:sdp_2}
			\begin{aligned}
				&\max \textrm{ } Y_{Z\wedge\textrm{ph}}  \\
				&\textrm{s.t. } (1-\epsilon_j) \mel*{\phi_j}{\hat{M}_{\gamma}^\dagger \hat{M}_{\gamma}}{\phi_j}_a + \sqrt{\epsilon_j(1-\epsilon_j)} \mel*{\phi_j}{\hat{M}_{\gamma}^\dagger \hat{M}_{\gamma}}{\phi_j^\perp}_a \\
				&\quad\quad+ \sqrt{\epsilon_j(1-\epsilon_j)} \mel*{\phi_j^\perp}{\hat{M}_{\gamma}^\dagger \hat{M}_{\gamma}}{\phi_j}_a + \epsilon_j \mel*{\phi_j^\perp}{\hat{M}_{\gamma}^\dagger \hat{M}_{\gamma}}{\phi_j^\perp}_a =  Y_{j}^{\gamma} \quad \forall j, \gamma,\\
				&\sum_\gamma  \mel{\phi_{j'}}{\hat M_{\gamma}^\dagger \hat M_{\gamma}}{\phi_j}_a = \braket{\phi_{j'}}{\phi_j}_a \quad \forall j,j', \\
				&\sum_\gamma  \mel*{\phi_{j}^\perp}{\hat M_{\gamma}^\dagger \hat M_{\gamma}}{\phi_j}_a = 0 \quad \forall j, \\
				&\sum_\gamma  \mel*{\phi_{j}^\perp}{\hat M_{\gamma}^\dagger \hat M_{\gamma}}{\phi_j^\perp}_a = 1 \quad \forall j,
			\end{aligned}
		\end{equation}
		where $\gamma \in \{0_X,1_X,f\}$ and whose objective function is given by
		\begin{equation}
			\label{eq:Y_ph_def_PM_expanded}
			\begin{aligned}
				Y_{Z \wedge \textrm{ph}} &= \frac{1}{4} \bigg[ (1-\epsilon_0) \ev*{\hat{M}_{1_X}^\dagger \hat{M}_{1_X}}{\phi_0}_a + \sqrt{\epsilon_0(1-\epsilon_0)} \mel*{\phi_0}{\hat{M}_{1_X}^\dagger \hat{M}_{1_X}}{\phi_0^\perp}_a \\
				&+ \sqrt{\epsilon_0(1-\epsilon_0)} \mel*{\phi_0^\perp}{\hat{M}_{1_X}^\dagger \hat{M}_{1_X}}{\phi_0}_a + \epsilon_0 \ev*{\hat{M}_{1_X}^\dagger \hat{M}_{1_X}}{\phi_0^\perp}_a \\
				&+ \sqrt{(1-\epsilon_0)(1-\epsilon_1)} \mel*{\phi_0}{\hat{M}_{1_X}^\dagger \hat{M}_{1_X}}{\phi_1}_a + \sqrt{(1-\epsilon_0)\epsilon_1} \mel*{\phi_0}{\hat{M}_{1_X}^\dagger \hat{M}_{1_X}}{\phi_1^\perp}_a \\
				&+ \sqrt{\epsilon_0(1-\epsilon_1)} \mel*{\phi_0^\perp}{\hat{M}_{1_X}^\dagger \hat{M}_{1_X}}{\phi_1}_a + \sqrt{\epsilon_0\epsilon_1} \mel*{\phi_0^\perp}{\hat{M}_{1_X}^\dagger \hat{M}_{1_X}}{\phi_1^\perp}_a \\
				&+ \sqrt{(1-\epsilon_1)(1-\epsilon_0)} \mel*{\phi_1}{\hat{M}_{1_X}^\dagger \hat{M}_{1_X}}{\phi_0}_a + \sqrt{(1-\epsilon_1)\epsilon_0} \mel*{\phi_1}{\hat{M}_{1_X}^\dagger \hat{M}_{1_X}}{\phi_0^\perp}_a \\
				&+ \sqrt{\epsilon_1(1-\epsilon_0)} \mel*{\phi_1^\perp}{\hat{M}_{1_X}^\dagger \hat{M}_{1_X}}{\phi_0}_a + \sqrt{\epsilon_1\epsilon_0} \mel*{\phi_1^\perp}{\hat{M}_{1_X}^\dagger \hat{M}_{1_X}}{\phi_0^\perp}_a \\
				&+ (1-\epsilon_1) \ev*{\hat{M}_{1_X}^\dagger \hat{M}_{1_X}}{\phi_1}_a + \sqrt{\epsilon_1(1-\epsilon_1)} \mel*{\phi_1}{\hat{M}_{1_X}^\dagger \hat{M}_{1_X}}{\phi_1^\perp}_a \\
				&+ \sqrt{\epsilon_1(1-\epsilon_1)} \mel*{\phi_1^\perp}{\hat{M}_{1_X}^\dagger \hat{M}_{1_X}}{\phi_1}_a + \epsilon_1 \ev*{\hat{M}_{1_X}^\dagger \hat{M}_{1_X}}{\phi_1^\perp}_a \\
				&+ (1-\epsilon_0) \ev*{\hat{M}_{0_X}^\dagger \hat{M}_{0_X}}{\phi_0}_a + \sqrt{\epsilon_0(1-\epsilon_0)} \mel*{\phi_0}{\hat{M}_{0_X}^\dagger \hat{M}_{0_X}}{\phi_0^\perp}_a \\
				&+ \sqrt{\epsilon_0(1-\epsilon_0)} \mel*{\phi_0^\perp}{\hat{M}_{0_X}^\dagger \hat{M}_{0_X}}{\phi_0}_a + \epsilon_0 \ev*{\hat{M}_{0_X}^\dagger \hat{M}_{0_X}}{\phi_0^\perp}_a \\
				&- \sqrt{(1-\epsilon_0)(1-\epsilon_1)} \mel*{\phi_0}{\hat{M}_{0_X}^\dagger \hat{M}_{0_X}}{\phi_1}_a - \sqrt{(1-\epsilon_0)\epsilon_1} \mel*{\phi_0}{\hat{M}_{0_X}^\dagger \hat{M}_{0_X}}{\phi_1^\perp}_a \\
				&- \sqrt{\epsilon_0(1-\epsilon_1)} \mel*{\phi_0^\perp}{\hat{M}_{0_X}^\dagger \hat{M}_{0_X}}{\phi_1}_a - \sqrt{\epsilon_0\epsilon_1} \mel*{\phi_0^\perp}{\hat{M}_{0_X}^\dagger \hat{M}_{0_X}}{\phi_1^\perp}_a \\
				&- \sqrt{(1-\epsilon_1)(1-\epsilon_0)} \mel*{\phi_1}{\hat{M}_{0_X}^\dagger \hat{M}_{0_X}}{\phi_0}_a - \sqrt{(1-\epsilon_1)\epsilon_0} \mel*{\phi_1}{\hat{M}_{0_X}^\dagger \hat{M}_{0_X}}{\phi_0^\perp}_a \\
				&- \sqrt{\epsilon_1(1-\epsilon_0)} \mel*{\phi_1^\perp}{\hat{M}_{0_X}^\dagger \hat{M}_{0_X}}{\phi_0}_a - \sqrt{\epsilon_1\epsilon_0} \mel*{\phi_1^\perp}{\hat{M}_{0_X}^\dagger \hat{M}_{0_X}}{\phi_0^\perp}_a \\
				&+ (1-\epsilon_1) \ev*{\hat{M}_{0_X}^\dagger \hat{M}_{0_X}}{\phi_1}_a + \sqrt{\epsilon_1(1-\epsilon_1)} \mel*{\phi_1}{\hat{M}_{0_X}^\dagger \hat{M}_{0_X}}{\phi_1^\perp}_a \\
				&+ \sqrt{\epsilon_1(1-\epsilon_1)} \mel*{\phi_1^\perp}{\hat{M}_{0_X}^\dagger \hat{M}_{0_X}}{\phi_1}_a + \epsilon_1 \ev*{\hat{M}_{0_X}^\dagger \hat{M}_{0_X}}{\phi_1^\perp}_a \bigg].
			\end{aligned}
		\end{equation} 
		To obtain the first type of constraints in \cref{eq:sdp_2}, we have substituted \cref{eq:psi_j_trick} into \cref{eq:Y_j_gamma}, and the second, third and fourth types of constraints hold due to \cref{eq:completeness_eq}. Also, to obtain the objective function in \cref{eq:Y_ph_def_PM_expanded}, we have substituted \cref{eq:psi_j_trick} into \cref{eq:Y_ph_def_PM}.
		
		Again, it is straightforward to show that \cref{eq:sdp_2} is an SDP. Let us define $G$ as the Gram matrix of the union of the vector sets $\{\hat M_\gamma \ket{\phi_j}_a\}_{j,\gamma}$ and $\{\hat M_\gamma \ket*{\phi_j^\perp}_a\}_{j,\gamma}$. Clearly, all the constraints in \cref{eq:sdp_2} are linear with respect to the elements of $G$. Similarly, the objective function in \cref{eq:Y_ph_def_PM_expanded} is a linear function of elements of $G$. Moreover, since $G$ is a Gram matrix, it is necessarily positive semidefinite. Hence, \cref{eq:sdp_2} satisfies all requirements of an SDP, and we can obtain a rigorous upper bound on $Y_{Z\wedge\textrm{ph}}$ by solving its dual problem. Finally, by substituting this bound in \cref{eq:e_ph_def}, we obtain a rigorous upper bound on the phase-error rate.

		\section{Numerical results}
		\label{sec:results}
		
		In this section, for illustration purposes, we evaluate the secret-key rate achievable using our security proof for some selected protocols when only a partial characterization of the emitted states is available. We also compare our results with those obtained using state-of-the-art analytical techniques that can handle this scenario.
		
		In our simulations, we consider, for all protocols, a standard optical fiber channel with attenuation coefficient $\alpha_{\rm dB} = \SI{0.2}{\decibel\per\kilo\metre}$, and single-photon detectors of efficiency $\eta_{\rm det} = 0.73$ and dark count probability $p_d = 10^{-8}$ \cite{pittaluga600kmRepeaterlike2021}. The error correction inefficiency is set to $f = 1.16$ (see the secret-key rate formula in \cref{eq:skr_PM}). For the detailed channel models, see Ref.~\cite[Supplementary Equations 3]{curras-lorenzoSecurityFramework2023} for \cref{fig:bb84_and_3state,fig:graph_mdi_overall}, and Ref.~\cite[Appendix C]{navarreteImprovedFiniteKey2022} for \cref{fig:graph_decoy}.

		\subsection{BB84-type protocols}
		
		Here, we apply our analysis to a BB84 protocol {with an imperfect source, for both its standard four-state version and its alternative three-state version \cite{boileauUnconditionalSecurity2005,tamakiLosstolerantQuantum2014}}. We consider that Alice's states $\{\ket{\psi_j}_{a}\}_j$ are $\epsilon$-close in fidelity to known qubit states $\{\ket{\phi_j}_{a}\}_j$ in the XZ plane, i.e.,
		\begin{equation}
			\ket{\phi_{j}}_a = \cos (\varphi_j/2) \ket{0}_a +\sin (\varphi_j/2) \ket{1}_a ,
		\end{equation}
		where $j \in \{0,1,+,-\}$ for {the four-state scenario} and  $j \in \{0,1,+\}$ for the {three-state scenario}. We model SPFs as small deviations of the encoding angles $\varphi_j$ from their ideal values $\hat \varphi_{0} = 0$, $\hat \varphi_{+} = \pi/2$, $\hat \varphi_{1} = \pi$, and $\hat \varphi_{-} = 3\pi/2$, such that
		\begin{equation}
			\varphi_{j} = (1+\delta/\pi) \hat \varphi_{j},
		\end{equation}
		where $\delta \in [0,\pi)$ quantifies the magnitude of the SPFs.
		
		\Cref{fig:graph_bb84,fig:graph_3state} compare the asymptotic secret-key rates obtained using our analysis and the state-of-the-art analytical techniques from \cite{curras-lorenzoSecurityFramework2023}, which provide a complete finite-key security proof against coherent attacks. For the four-state BB84 protocol (\cref{fig:graph_bb84}), our numerical approach—which obtains an optimal bound on the phase-error rate given the available information—offers almost no improvement. This suggests that the security proof in \cite{curras-lorenzoSecurityFramework2023} is remarkably tight, at least in the asymptotic regime. However, for the three-state protocol (\cref{fig:graph_3state}), our method provides a larger advantage. In fact, interestingly, our analysis yields similar key rates for both protocols, contradicting the findings in \cite{pereiraModifiedBB842023,curras-lorenzoSecurityFramework2023}, which suggested that{, when $\epsilon > 0$, one could obtain appreciably higher key rates by sending four states}. This suggests that, while the analyses in \cite{pereiraModifiedBB842023,curras-lorenzoSecurityFramework2023} are remarkably tight for the {four-state scenario}, they seem to be {slightly} less tight for the {three-state scenario}. Consequently, the previously reported key rate differences between these protocols in the presence of side channels can likely be attributed to this gap, rather than inherent limitations of the three-state protocol.

		\begin{figure}[h]%
			\centering
			\subfloat[\centering]{{\includegraphics[width=8.55cm]{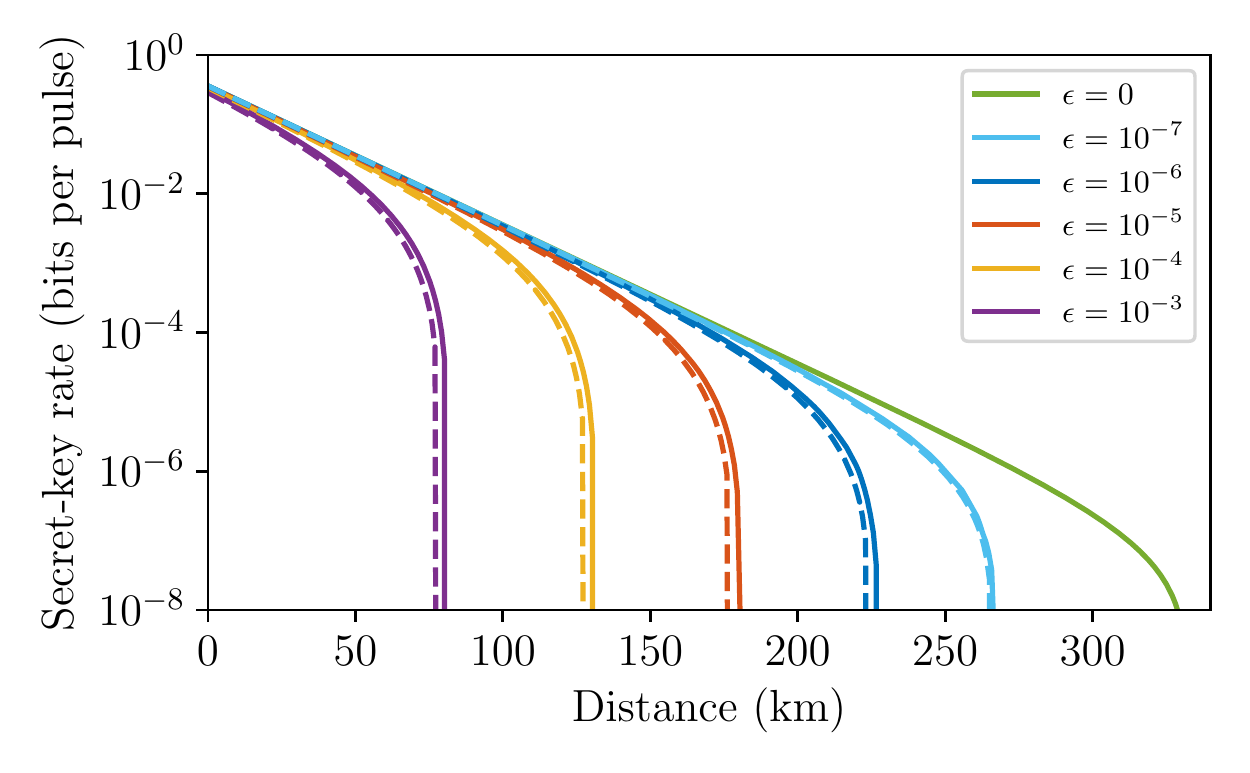}}\label{fig:graph_bb84}}%
			\qquad
			\subfloat[\centering]{{\includegraphics[width=8.55cm]{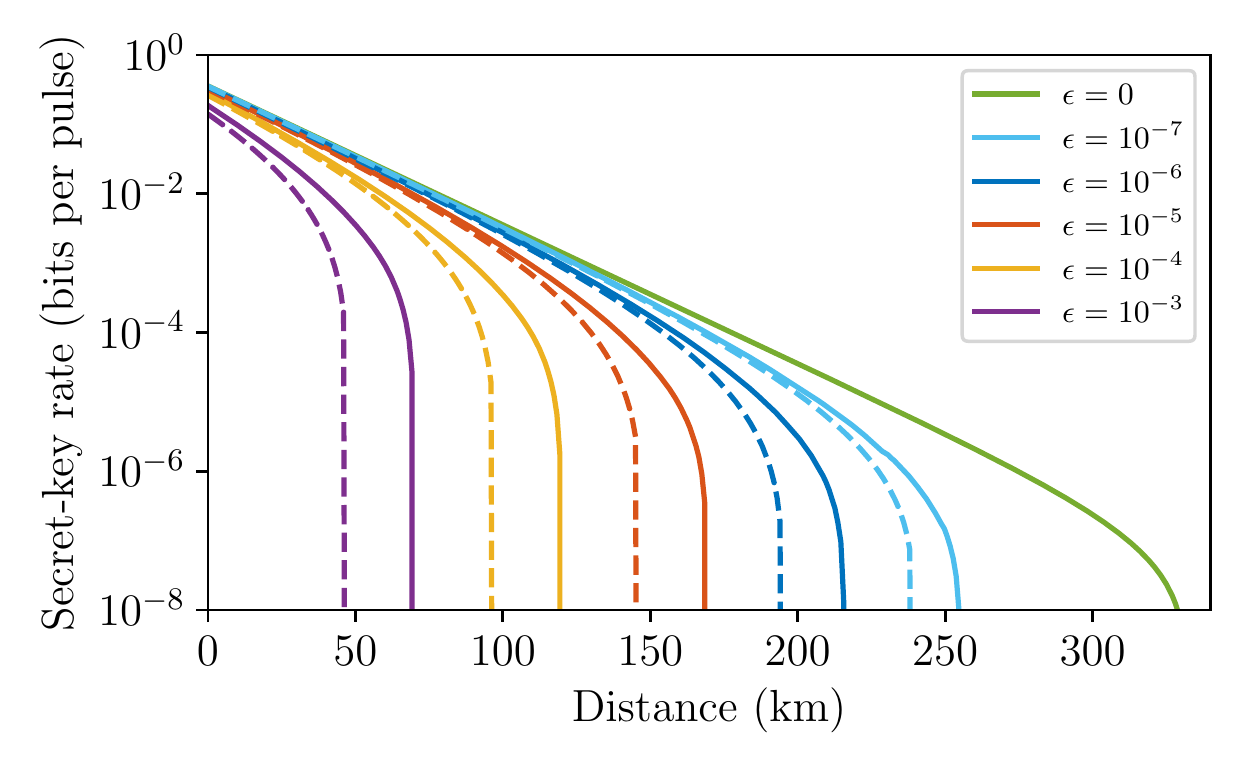}}\label{fig:graph_3state}}%
			\caption{Asymptotic secret-key rates for {the BB84 protocol with an imperfect source in its (a) standard four-state version and (b) alternative three-state version \cite{boileauUnconditionalSecurity2005,tamakiLosstolerantQuantum2014}} using the numerical analysis presented in this work (solid lines) compared with the analytical results in \cite{curras-lorenzoSecurityFramework2023} (dashed lines). We consider $\delta = 0.063$ \cite{honjoDifferentialphaseshiftQuantum2004,xuExperimentalQuantum2015}   and several values of $\epsilon$, {which correspond to the magnitudes of the characterized SPFs and the uncharacterized imperfections, respectively.}}%
			\label{fig:bb84_and_3state}%
		\end{figure}

		\subsection{Coherent-light-based MDI protocol}
		
		Next, we apply our analysis to a coherent-light-based MDI-type protocol introduced in \cite{navarretePracticalQuantum2021}. We have chosen this scheme due to its non-qubit encoding, allowing us to compare the performance of our techniques with those of existing analytical methods for such scenarios.
		
		Ideally, in this scheme, Alice and Bob  emit states from the set $\{\ket*{\sqrt \mu}, \ket*{-\sqrt \mu}, \ket{\rm vac}\}$, as illustrated in \cref{fig:coherentMDI}. The key generation states $\ket{\sqrt\mu}$ and $\ket{-\sqrt\mu}$ are coherent states of fixed intensity $\mu$ and opposite phases, associated with the bit values 0 and 1, respectively. The state $\ket{\rm vac}$ is a vacuum state that can be generated by turning off the laser and contributes only to the phase-error rate estimation. {As before, we assume that the actual states emitted by each of Alice and Bob are only $\epsilon$-close in fidelity to these ideal states.}

		\begin{figure}[h]
			\centering

			\includegraphics[width=0.45\columnwidth]{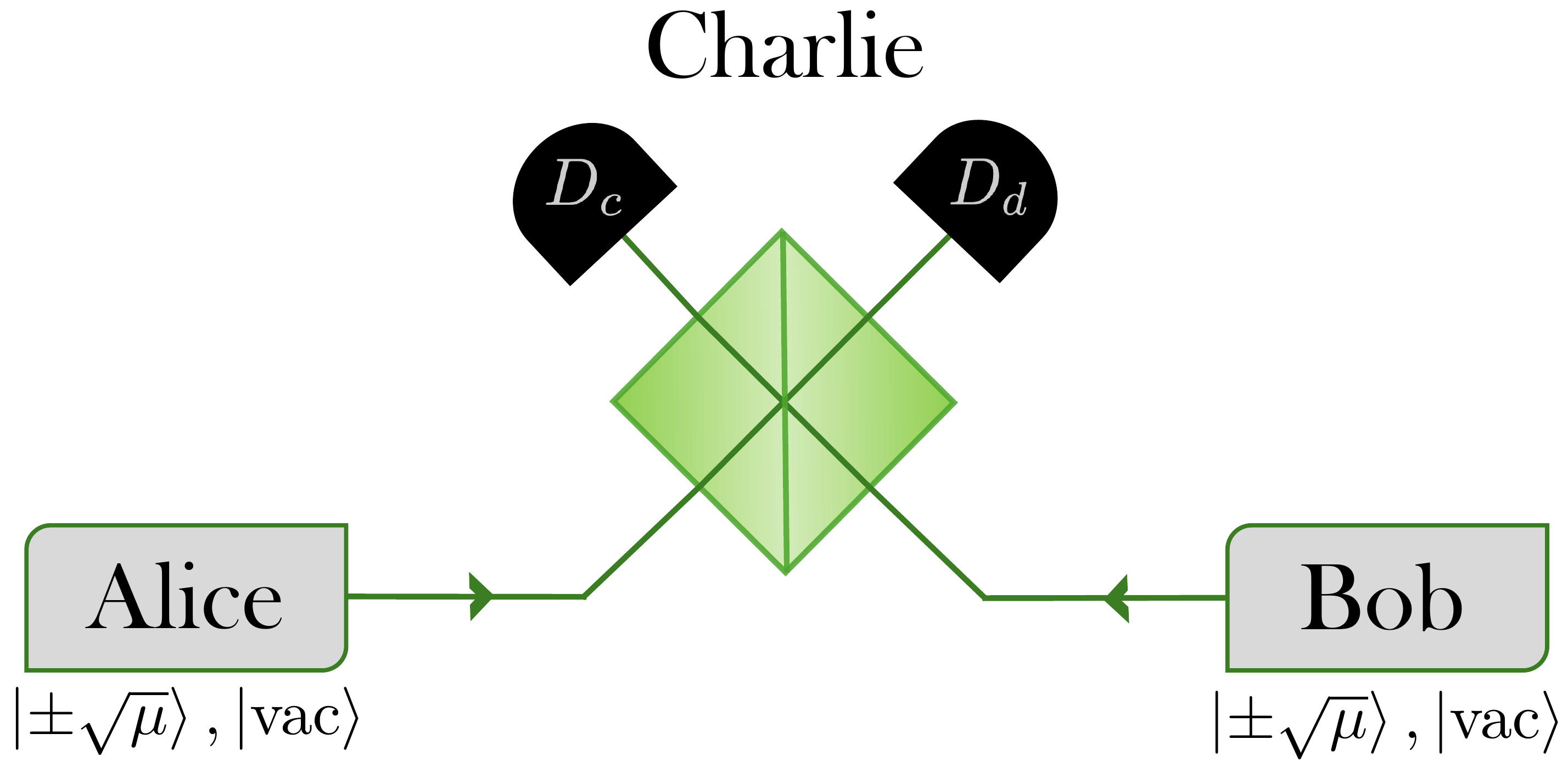}

			\caption{Illustration of the coherent-light-based MDI-type setup introduced in \cite{navarretePracticalQuantum2021}. In every round, each of Alice and Bob (ideally) prepares a state in the set $\{\ket*{\sqrt \mu}, \ket*{-\sqrt \mu}, \ket{\rm vac}\}$ and sends it to the untrusted middle node Charlie through a quantum channel. Charlie measures the incoming signals by interfering them with a 50:50 beamsplitter, followed by two threshold single-photon detectors $D_c$ and $D_d$, which are associated with constructive and destructive interference, respectively; and announces the outcomes.}
			\label{fig:coherentMDI}
		\end{figure}

		\Cref{fig:graph_coh_mdi} shows the secret-key rates achievable using our analysis for several values of $\xi \coloneqq 1-(1-\epsilon)^2 \approx 2 \epsilon$, and compares them with those achievable with the analytical techniques in \cite{curras-lorenzoSecurityFramework2023}. For this protocol, our analysis provides a significant performance improvement in both the achievable secret-key rate at any particular distance and the maximum distance at which a non-zero key rate can be obtained. 
		The reason for this remarkable improvement seems to be that our analysis provides significantly tighter bounds on the phase-error rate, particularly for higher values of the laser intensity $\mu$. In fact, our simulation results suggest that the optimal value of $\mu$ for our analysis is around $1.7$ times higher than that of \cite{curras-lorenzoSecurityFramework2023} for most distances. We conjecture that this is because the set of states $\{\ket*{\sqrt \mu}, \ket*{-\sqrt \mu}, \ket{\rm vac}\}$ approach qubit states as $\mu \to 0$; specifically, the component of $\ket{\rm vac}$ that lies outside the qubit space spanned by $\{\ket*{\sqrt \mu}, \ket*{-\sqrt \mu}\}$ vanishes as $\mu \to 0$.  Since the security proof in \cite{curras-lorenzoSecurityFramework2023} is ideally suited for qubit encoding scenarios, it provides tighter results for low $\mu$ but becomes looser as $\mu$ increases. Consequently, its optimal values of $\mu$ are lower than those of our numerical analysis, resulting in a worse overall performance.

		\begin{figure}[h]%
			\centering
			\includegraphics[width=8.55cm]{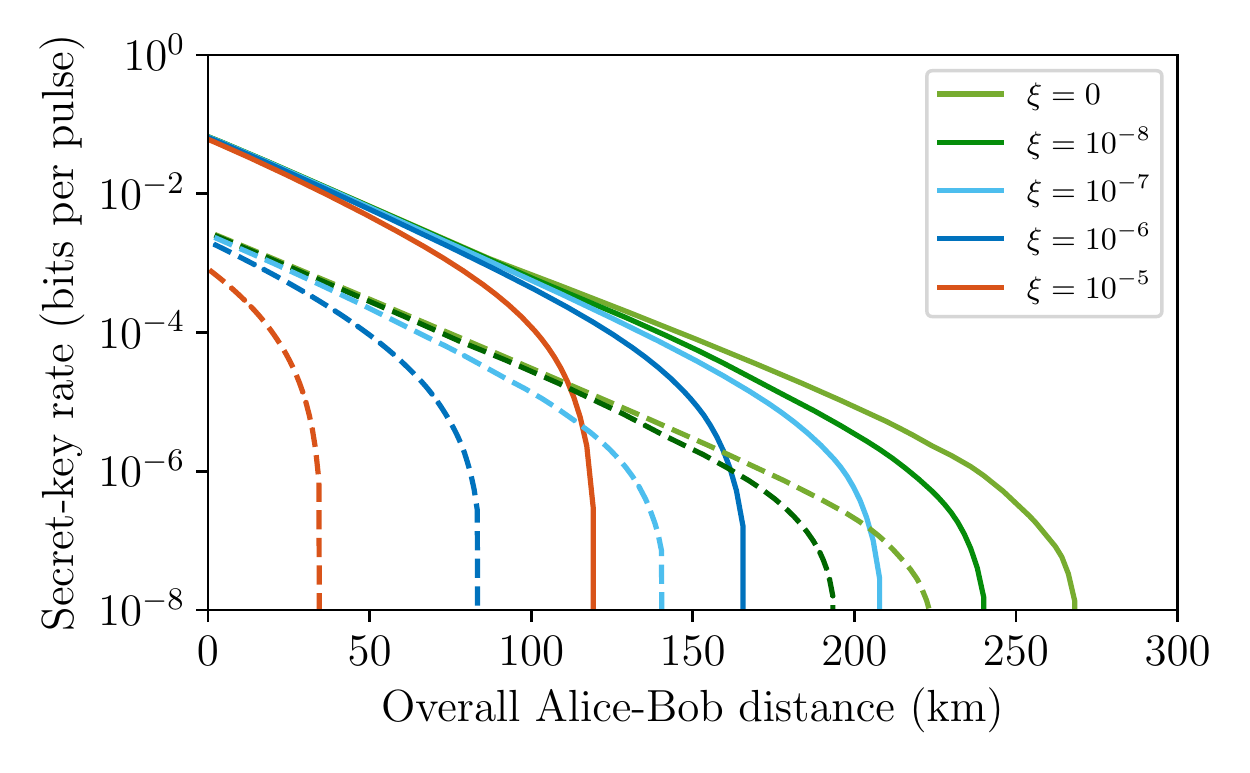}
			\caption{Asymptotic secret-key rates for the coherent-light-based MDI protocol \cite{navarretePracticalQuantum2021} illustrated in \cref{fig:coherentMDI} using the numerical analysis presented in this work (solid lines) compared with the analytical results in \cite{curras-lorenzoSecurityFramework2023} (dashed lines). We {consider several values of $\xi \coloneqq 1-(1-\epsilon)^2 \approx 2 \epsilon$ and} optimize over the value of the coherent-light intensity $\mu$ for each distance value.}%
			\label{fig:graph_coh_mdi}%
			\label{fig:graph_mdi_overall}
		\end{figure}

		\subsection{Trojan-horse attack against a decoy-state protocol}
		\label{subsec:decoy_THA_results}
		
		Our numerical framework extends naturally to decoy-state QKD protocols, as detailed in \cref{app:decoy_state}. In fact, for such protocols, our analysis can incorporate not only imperfections in the bit-and-basis encoding, but also vulnerabilities in intensity modulation, such as information leakage about Alice's intensity setting choices. 
		
		As an example to demonstrate the practical utility of our analysis, we simulate the achievable secret-key rate in a scenario where Eve launches a Trojan-horse attack (THA) targeting both the intensity modulator and the bit-and-basis encoder \cite{vakhitovLargePulse2001,gisinTrojanhorseAttacks2006,lucamariniPracticalSecurity2015,navarreteImprovedFiniteKey2022,sixtoQuantumKey2024}. A key advantage of our analysis is its characterization simplicity: for this scenario, it only requires knowledge of a single parameter, an upper bound $I_{\rm max}$ on the intensity (average photon number) of the back-reflected light. That is, no detailed characterization of the quantum state of the injected or back-reflected light is needed.
		
		{The parameter $I_{\rm max}$ could be established through the following process: (1) install an optical fuse or power limiter to restrict Eve's per-round injected light intensity to some maximum value $I_{\rm in}^{\rm U}$ (or alternatively, establish Eve's maximum input intensity by considering the laser induced damage threshold \cite{lucamariniPracticalSecurity2015}); (2) implement optical isolation in the source to ensure that the back-reflected light's power is limited to a fraction $\eta^{\rm U}$ of Eve's injected light; and (3) calculate $I_{\rm max} = \eta^{\rm U}I_{\rm in}^{\rm U}$. Note that, in principle, any desired value of $I_{\rm max}$ could be achieved by simply increasing the source's isolation to reduce the value of $\eta^{\rm U}$.} \Cref{fig:graph_decoy} shows how the achievable secret-key rate of decoy-state QKD varies with $I_{\rm max}$. For more information on the application of our analysis to this scenario, see \cref{appsec:THA_analysis}.

		\begin{figure}[h]%
			\centering
			{
				\includegraphics[width=8.55cm]{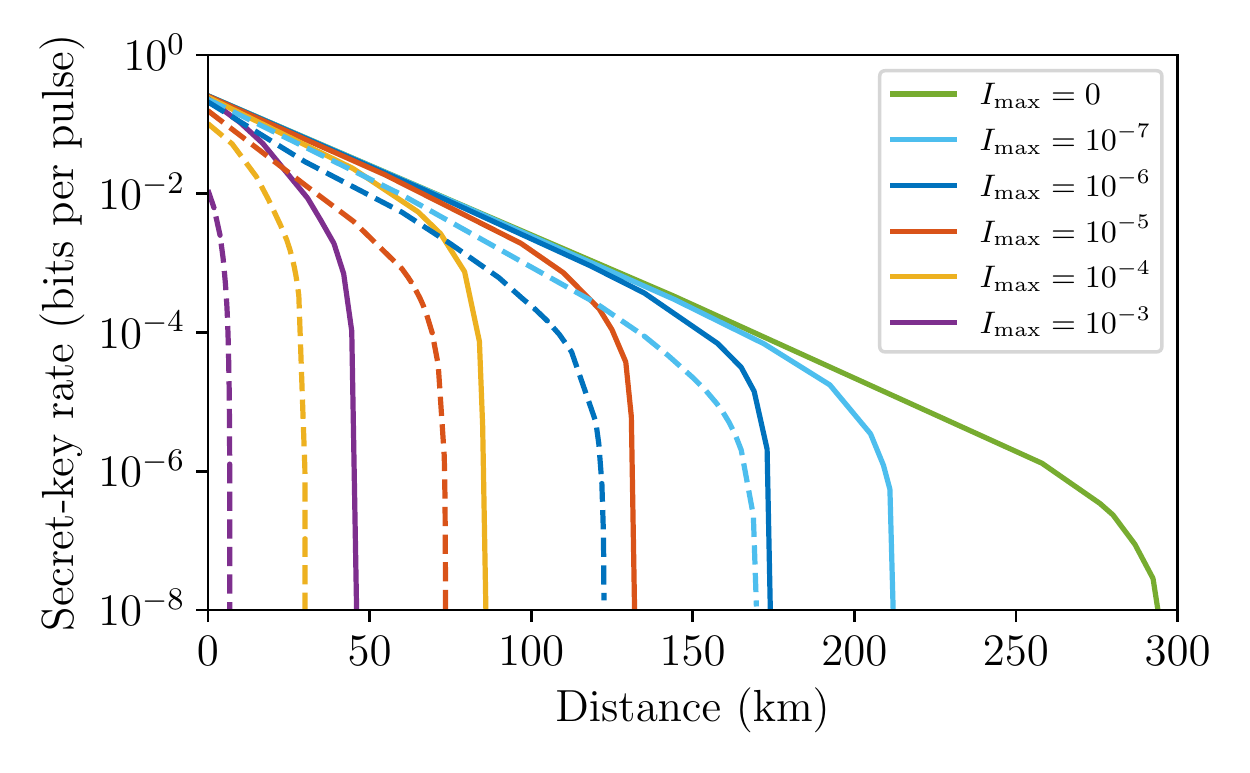}
			}%
			\caption{Asymptotic secret-key rates as a function of the maximum intensity ($I_{\rm max}$) of the back-reflected light for a decoy-state BB84 protocol under a Trojan-horse attack (THA) when using our analysis {(solid lines), compared with the analytical results in \cite{navarreteImprovedFiniteKey2022} (dashed lines)}. We assume that Alice uses three intensity settings $\{\mu_0,\mu_1,\mu_2\}$, where for simplicity we set {$\mu_1 = 0.02$},  $\mu_2 =0$, and optimize over the value of $\mu_0$ for each distance value. 
			}%
			\label{fig:graph_decoy}
		\end{figure}

		\section{Discussion and Conclusion}
		\label{sec:discussion_conclusion}
		
		In this paper, we have introduced a novel numerical security proof approach based on SDP that can evaluate the asymptotic secret-key rate of prepare-and-measure and measurement-device-independent QKD protocols when only partial information about the emitted states is available. This represents a significant advancement over previous numerical techniques \sdpcitations{}, which typically require perfect state characterization, a condition that is very difficult to meet in practical implementations. Our methodology provides an optimal (up to numerical precision) upper bound on the phase-error rate given the available information about the emitted states, and consequently, an optimal lower bound on the asymptotic secret-key rate within the entropic uncertainty relation \cite{tomamichelTightFinitekey2012} and phase-error correction \cite{koashiSimpleSecurity2009} frameworks. We have compared the asymptotic secret-key rate obtainable using our results with those of existing analytical techniques \cite{curras-lorenzoSecurityFramework2023} and found that our approach yields higher key rates, particularly for protocols with non-qubit encoding spaces. This demonstrates its potential to improve the performance of QKD systems in realistic scenarios.
		
		Our current work focuses on deriving asymptotic bounds on the secret-key rate, in line with many other recent numerical security proofs \sdpcitationsasymp{}. In contrast, the analytical proof in \cite{curras-lorenzoSecurityFramework2023} is directly applicable in the finite-size regime. To transform our result into a finite-size security proof valid against general attacks, one could apply known concentration bounds (see, e.g., \cite{georgeNumericalCalculations2021,bunandarNumericalFinitekey2020}) and then use established techniques to lift the security proof from collective attacks to general attacks, such as the generalized entropy accumulation theorem (GEAT) \cite{metgerSecurityQuantum2023} or the postselection technique (PST) \cite{christandlPostselectionTechnique2009}. However, applying the GEAT introduces a sequential condition that reduces the protocol's repetition rate, which is not required in \cite{curras-lorenzoSecurityFramework2023}. On the other hand, employing the PST can lead to a considerable reduction in the achievable finite-size secret-key rate, although recent improvements \cite{naharPostselectionTechnique2024} may partially mitigate this issue.  Consequently, it remains an open question whether or not our results could outperform those of the analytical security proof in \cite{curras-lorenzoSecurityFramework2023} not only in the asymptotic regime but also for practical finite-size block lengths and considering general coherent attacks.
		
		Despite this open question, our work represents a significant step forward in the development of practical and secure QKD systems. The flexibility of our numerical approach allows for the rapid evaluation of the asymptotic key rates for a wide range of protocols, providing valuable insights into their potential performance under partial source characterization and guiding the design of future QKD implementations. Moreover, our technique can be easily adapted to incorporate additional constraints or information about the emitted states as they become available, further enhancing its applicability to real-world scenarios.
		
		In conclusion, our novel SDP-based numerical security proof technique addresses a critical gap in existing numerical methods by enabling the evaluation of asymptotic secret-key rates for QKD protocols with partial state characterization. By doing so, it not only advances the state-of-the-art in QKD security proofs but also paves the way for more practical and secure quantum communication systems. 

		\section*{ACKNOWLEDGEMENTS}
		
		{We thank Mateus Araújo, Kiyoshi Tamaki and Go Kato for valuable discussions.  This work was supported by the Galician Regional Government (consolidation of Research Units: AtlantTIC), the Spanish Ministry of
			Economy and Competitiveness (MINECO), the Fondo
			Europeo de Desarrollo Regional (FEDER) through
			the grant No.~PID2020-118178RB-C21, MICIN with
			funding from the European Union NextGenerationEU
			(PRTR-C17.I1) and the Galician Regional Government
			with own funding through the "Planes Complementarios de I+D+I con las Comunidades Aut\'onomas" in
			Quantum Communication, the European Union’s Horizon Europe Framework Programme under the Marie
			Sk\l{}odowska-Curie Grant No.~101072637 (Project QSI)
			and the project "Quantum Security Networks Partnership" (QSNP, grant agreement No 101114043).}
		
		\appendix

		\section{Application to MDI-type protocols}
		\label{app:MDI}

		{Here, we present our analysis for general MDI-type protocols \cite{loMeasurementDeviceIndependentQuantum2012}, which is quite similar to that for P$\&$M protocols presented in \cref{sec:PM}. For simplicity, here, we consider a discretely-modulated protocol. However, we remark that our analysis can be applied to non-discretely-modulated MDI-type cases, such as decoy-state MDI-type scenarios, by following the approach in \cref{app:decoy_state}.
			
			In a general discretely-modulated MDI-type scenario, for} each round, Alice and Bob each probabilistically selects a state from the sets $\{\ket{\psi_i}_a\}_i$ and $\{\ket{\psi_j}_b\}_j$, respectively, and sends it to the untrusted middle node Charlie through the quantum channel. Charlie then performs a measurement on the received signals and announces an outcome $\gamma$. Note that, in general, the states emitted by Alice and Bob could be different. For simplicity, in this discussion, we will assume that Charlie reports either a successful measurement or a failure, i.e., that $\gamma \in \{\texttt{pass},\texttt{fail}\}$. However, we remark that our analysis can also be applied when Charlie reports more than one successful result. For example, in standard MDI-QKD \cite{loMeasurementDeviceIndependentQuantum2012}, Charlie can report a projection to the singlet state, the triplet state or a failure. 
		
		In MDI-type protocols, Eve may not only control the quantum channel but also the middle node Charlie, and all the announcements that he makes. Therefore, in this scenario, Eve's collective attack can be described as an isometry
		\begin{equation}
			\ket{\psi_i}_a\ket{\psi_j}_b \to \sum_{\gamma}\hat{M}_{\gamma} \ket{\psi_i}_a\ket{\psi_j}_b \ket{\gamma}_E,
		\end{equation}
		where $\{\hat M_\gamma\}$ is a set of Kraus operators such that
		\begin{equation}
			\label{eq:completeness_eq_mdi}
			{\sum_{\gamma \in \{\texttt{pass},\texttt{fail}\}} \hat M_\gamma^\dagger \hat M_\gamma = \mathbb{I}.}
		\end{equation}
		The probability that Eve reports the outcome $\gamma$ when Alice and Bob select the settings $i$ and $j$, respectively, can be expressed as
		\begin{equation}
			\label{eq:Y_ij_gamma}
			\begin{gathered}
				Y_{i,j}^{\gamma} = \bra{\psi_i}\!\!\mel{\psi_j}{\hat M_{\gamma}^\dagger \hat M_{\gamma}}{\psi_i}_a\!\ket{\psi_j}_b. \\      
			\end{gathered}
		\end{equation}
		
		For concreteness, let us assume that Alice and Bob obtain their sifted key from the events in which Alice emits $\ket{\psi_0}_a$ or $\ket{\psi_1}_a$, Bob emits $\ket{\psi_0}_b$ or $\ket{\psi_1}_b$, and Charlie announces a successful detection. Also, let us assume that Alice (Bob) emits $\ket{\psi_0}_a$ and $\ket{\psi_1}_a$ ($\ket{\psi_0}_b$ and $\ket{\psi_1}_b$) equiprobabilistically.  Note that, in these rounds, Alice and Bob could have equivalently generated the entangled states
		\begin{equation}
			\label{eq:Psi_Z_def_mdi}
			\ket{\Psi_Z}_{AaBb} = \frac{1}{\sqrt 2}  \left(\ket{0}_A \ket{\psi_0}_a +\ket{1}_A \ket{\psi_1}_a  \right) \otimes \frac{1}{\sqrt 2}  \left(\ket{0}_B \ket{\psi_0}_b +\ket{1}_B \ket{\psi_1}_b  \right), 
		\end{equation}
		and then measured their respective ancillary systems $A$ and $B$ in the computational basis $\{\ket{0},\ket{1}\}$. The objective of the security proof is to estimate the phase-error rate $e_{\rm ph}$, which is defined as the error rate that Alice and Bob would have observed if, in these key rounds, they had instead measured their respective ancillary systems $A$ and $B$ in the Hadamard basis $\{\ket{+},\ket{-}\}$\footnote{We remark that the appropriate definition of a phase error depends on the MDI-type protocol in consideration. In \cref{eq:Y_ph_def_MDI}, we assume for concreteness that a phase error is defined as an event in which Alice and Bob obtain the same $X$-basis result, i.e. $\ket{+}_A$ and $\ket{+}_B$ or $\ket{-}_A$ and $\ket{-}_B$, which is the appropriate definition for the coherent-light-based MDI-type protocol depicted in \cref{fig:coherentMDI}. However, in some protocols, the appropriate definition of a phase error actually depends on the specific announcement made by Charlie. For example, in the standard MDI-QKD scheme \cite{loMeasurementDeviceIndependentQuantum2012}, a phase-error should be defined as an event in which Alice and Bob obtain the same (different) $X$-basis result when Charlie reports a projection to the triplet (singlet) state; see Ref.~\cite[Table I]{loMeasurementDeviceIndependentQuantum2012}. In such cases, one should define separate sifted key pairs according to the specific announcement made by Charlie, and apply our analysis to compute a bound on the phase-error rate separately for each sifted key pair. When defining phase errors as events in which Alice and Bob obtain a different $X$-basis result, one should define the phase-error probability as $Y_{Z \wedge \textrm{ph}} = \bra{\Psi_Z} \hat M_{\tt pass}^\dagger \big(\ketbra{+}_A \otimes \ketbra{-}_B + \ketbra{-}_A \otimes \ketbra{+}_B \big)  \hat M_{\tt pass} \ket{\Psi_Z}_{AaBb}$, and update the objective functions in \cref{eq:Y_ph_def_MDI,eq:Y_ph_def_MDI_expanded} accordingly.}. The asymptotic secret-key rate formula is the same as in the case of P$\&$M protocols and is given by \cref{eq:skr_PM}, where 
		$Y_Z$ now refers to the observed rate at which Eve announces a successful detection conditioned on Alice preparing a state in $\{\ket{\psi_0}_a, \ket{\psi_1}_a\}$ and Bob preparing a state in $\{\ket{\psi_0}_b, \ket{\psi_1}_b\}$, and $e_Z$ is the observed error rate associated to these events. Similarly, the phase-error rate $e_{\rm ph}$ is given by \cref{eq:e_ph_def},
		where $Y_{Z \wedge \textrm{ph}}$ now denotes the probability that Eve announces a successful detection \textit{and} Alice and Bob obtain a phase error conditioned on Alice and Bob preparing $\ket{\Psi_Z}_{AaBb}$ and measuring systems $A$ and $B$ in the Hadamard basis. 
		For the MDI-type protocol depicted in \cref{fig:coherentMDI}, we can write the term {$Y_{Z \wedge \textrm{ph}}$} as
		\begin{equation}
			\label{eq:Y_ph_def_MDI}
			\begin{aligned}
				Y_{Z \wedge \textrm{ph}} &= \bra{\Psi_Z} \hat M_{\tt pass}^\dagger \big(\ketbra{+}_A \otimes \ketbra{+}_B + \ketbra{-}_A \otimes \ketbra{-}_B \big)  \hat M_{\tt pass} \ket{\Psi_Z}_{AaBb} \\
				&= \frac{1}{8}\Big[\mel{\psi_{00}}{\hat M_{\tt pass}^\dagger \hat M_{\tt pass}}{\psi_{00}}_{ab} + \mel{\psi_{00}}{\hat M_{\tt pass}^\dagger \hat M_{\tt pass}}{\psi_{11}}_{ab} + \mel{\psi_{01}}{\hat M_{\tt pass}^\dagger \hat M_{\tt pass}}{\psi_{01}}_{ab} + \mel{\psi_{01}}{\hat M_{\tt pass}^\dagger \hat M_{\tt pass}}{\psi_{10}}_{ab} \\
				&+ \mel{\psi_{10}}{\hat M_{\tt pass}^\dagger \hat M_{\tt pass}}{\psi_{01}}_{ab} + \mel{\psi_{10}}{\hat M_{\tt pass}^\dagger \hat M_{\tt pass}}{\psi_{10}}_{ab} + \mel{\psi_{11}}{\hat M_{\tt pass}^\dagger \hat M_{\tt pass}}{\psi_{00}}_{ab} + \mel{\psi_{11}}{\hat M_{\tt pass}^\dagger \hat M_{\tt pass}}{\psi_{11}}_{ab}\Big],
			\end{aligned}
		\end{equation}
		{where we have introduced the shorthand notation $\ket{\psi_{ij}}_{ab} \coloneqq \ket{\psi_i}_a \otimes \ket{\psi_j}_b$.}
		
		We now show how to obtain an upper bound on \cref{eq:Y_ph_def_MDI} using SDP. For clarity, we first consider the case of full state characterization, followed by the case of partial state characterization.

		\subsection*{SDP with full state characterization}
		
		In this case, the joint states emitted by Alice and Bob, $\{\ket{\psi_{ij}}_{ab}\}_{ij}$, are characterized, and therefore the inner product between any two of these joint states, $\braket{\psi_{i'j'}}{\psi_{ij}}_{ab}$, is known precisely. We can express an upper bound on \cref{eq:Y_ph_def_MDI} as the following optimization problem
		\begin{equation}
			\label{eq:sdp_mdi}
			\begin{aligned}
				&\max \textrm{ } Y_{Z\wedge\textrm{ph}}  \\
				&\textrm{s.t. } \ev{\hat M_{\gamma}^\dagger \hat M_{\gamma}} {\psi_{ij}}_{ab} =  Y_{i,j}^{\gamma} \quad \forall i,j, \gamma,\\
				&\sum_\gamma  \mel{\psi_{i'j'}}{\hat M_{\gamma}^\dagger \hat M_{\gamma}}{\psi_{ij}}_{ab} = \braket{\psi_{i'j'}}{\psi_{ij}}_{ab} \quad \forall i,i',j,j'.
			\end{aligned}
		\end{equation}
		
		Note that the first type of constraints in \cref{eq:sdp_mdi} hold due to \cref{eq:Y_ij_gamma}, and the second type of constraints hold due to \cref{eq:completeness_eq_mdi}. {As in the P\&M case, by defining $G$ as the Gram matrix of the vector set $\{\hat M_{\gamma} \ket{\psi_{ij}}_{ab}\}_{i,j,\gamma}$} {and substituting $Y_{Z\wedge\textrm{ph}}$ by its expression in \cref{eq:Y_ph_def_MDI}}, it is straightforward to see that \cref{eq:sdp_mdi} is an SDP.

		\subsection*{SDP with partial state characterization}
		
		In this case, the states $\{\ket{\psi_i}_a\}_i$ ($\{\ket{\psi_j}_b\}_j$) that Alice (Bob) emits are only partially characterized. This means that we only know a bound on their overlap with some other characterized states $\{\ket{\phi_i}_a\}_i$ ($\{\ket{\phi_j}_b\}_j$) (see  \cref{eq:assumption}). As in the P\&M case, without loss of generality, we can assume that these states have the form given by \cref{eq:psi_j_trick}. Taking the tensor product of \cref{eq:psi_j_trick} for both Alice's and Bob's states, we obtain the following expression for their joint states,
		\begin{equation}
			\label{eq:psi_ij_trick}
			\ket{\psi_{ij}}_{ab} \coloneqq \ket{\psi_{i}}_{a} \otimes \ket*{\psi_{j}}_{b} = \sqrt{1-\xi_{ij}} \ket*{\phi_{ij}}_{ab}+\sqrt{\xi_{ij}} \ket*{\phi_{ij}^\perp}_{ab}, 
		\end{equation}
		where we have defined $\xi_{ij} \coloneqq 1- (1-\epsilon_i)(1-\epsilon_j)$ and $\ket*{\phi_{ij}}_{ab} \coloneqq \ket*{\phi_{i}}_{a} \otimes \ket*{\phi_{j}}_{b}$. Also, $\ket*{\phi_{ij}^\perp}_{ab}$ is a normalized state that can be easily shown to satisfy $\braket*{\phi_{ij}^\perp}{\phi_{ij}}_{ab} = 0$. Then, we define our optimization problem as
		\begin{equation}
			\label{eq:sdp_mdi_2}
			\begin{aligned}
				&\max \textrm{ } Y_{Z\wedge\textrm{ph}}  \\
				&\textrm{s.t. } (1-\xi_{ij}) \mel*{\phi_{ij}}{\hat{M}_{\gamma}^\dagger \hat{M}_{\gamma}}{\phi_{ij}}_{ab} + \sqrt{\xi_{ij}(1-\xi_{ij})} \mel*{\phi_{ij}}{\hat{M}_{\gamma}^\dagger \hat{M}_{\gamma}}{\phi_{ij}^\perp}_{ab} \\
				&\quad \quad+ \sqrt{\xi_{ij}(1-\xi_{ij})} \mel*{\phi_{ij}^\perp}{\hat{M}_{\gamma}^\dagger \hat{M}_{\gamma}}{\phi_{ij}}_{ab} + \xi_{ij} \mel*{\phi_{ij}^\perp}{\hat{M}_{\gamma}^\dagger \hat{M}_{\gamma}}{\phi_{ij}^\perp}_{ab} =   Y_{i,j}^{\gamma} \quad \forall i,j, \gamma,\\
				&\sum_\gamma  \mel{\phi_{i'j'}}{\hat M_{\gamma}^\dagger \hat M_{\gamma}}{\phi_{ij}}_{ab} = \braket{\phi_{i'j'}}{\phi_{ij}}_{ab} \quad \forall i,i',j,j', \\
				&\sum_\gamma  \mel*{\phi_{ij}^\perp}{\hat M_{\gamma}^\dagger \hat M_{\gamma}}{\phi_{ij}}_{ab} = 0 \quad \forall j, \\
				&\sum_\gamma  \mel*{\phi_{ij}^\perp}{\hat M_{\gamma}^\dagger \hat M_{\gamma}}{\phi_{ij}^\perp}_{ab} = 1 \quad \forall j,
			\end{aligned}
		\end{equation}
		whose objective function is given by
		\begin{equation}
			\label{eq:Y_ph_def_MDI_expanded}
			\begin{aligned}
				Y_{Z \wedge \textrm{ph}} &= \frac{1}{8}\Big[
				(1-\xi_{00})\ev*{\hat M_{\tt pass}^\dagger \hat M_{\tt pass}}{\phi_{00}}_{ab} + \sqrt{\xi_{00}(1-\xi_{00})}\mel*{\phi_{00}}{\hat M_{\tt pass}^\dagger \hat M_{\tt pass}}{\phi_{00}^\perp}_{ab} \\
				&+ \sqrt{\xi_{00}(1-\xi_{00})}\mel*{\phi_{00}^\perp}{\hat M_{\tt pass}^\dagger \hat M_{\tt pass}}{\phi_{00}}_{ab} + \xi_{00}\ev*{\hat M_{\tt pass}^\dagger \hat M_{\tt pass}}{\phi_{00}^\perp}_{ab} \\
				&+ \sqrt{(1-\xi_{00})(1-\xi_{11})}\mel*{\phi_{00}}{\hat M_{\tt pass}^\dagger \hat M_{\tt pass}}{\phi_{11}}_{ab} + \sqrt{(1-\xi_{00})\xi_{11}}\mel*{\phi_{00}}{\hat M_{\tt pass}^\dagger \hat M_{\tt pass}}{\phi_{11}^\perp}_{ab} \\
				&+ \sqrt{\xi_{00}(1-\xi_{11})}\mel*{\phi_{00}^\perp}{\hat M_{\tt pass}^\dagger \hat M_{\tt pass}}{\phi_{11}}_{ab} + \sqrt{\xi_{00}\xi_{11}}\mel*{\phi_{00}^\perp}{\hat M_{\tt pass}^\dagger \hat M_{\tt pass}}{\phi_{11}^\perp}_{ab} \\
				&+ (1-\xi_{01})\ev*{\hat M_{\tt pass}^\dagger \hat M_{\tt pass}}{\phi_{01}}_{ab} + \sqrt{\xi_{01}(1-\xi_{01})}\mel*{\phi_{01}}{\hat M_{\tt pass}^\dagger \hat M_{\tt pass}}{\phi_{01}^\perp}_{ab} \\
				&+ \sqrt{\xi_{01}(1-\xi_{01})}\mel*{\phi_{01}^\perp}{\hat M_{\tt pass}^\dagger \hat M_{\tt pass}}{\phi_{01}}_{ab} + \xi_{01}\ev*{\hat M_{\tt pass}^\dagger \hat M_{\tt pass}}{\phi_{01}^\perp}_{ab} \\
				&+ \sqrt{(1-\xi_{01})(1-\xi_{10})}\mel*{\phi_{01}}{\hat M_{\tt pass}^\dagger \hat M_{\tt pass}}{\phi_{10}}_{ab} + \sqrt{(1-\xi_{01})\xi_{10}}\mel*{\phi_{01}}{\hat M_{\tt pass}^\dagger \hat M_{\tt pass}}{\phi_{10}^\perp}_{ab} \\
				&+ \sqrt{\xi_{01}(1-\xi_{10})}\mel*{\phi_{01}^\perp}{\hat M_{\tt pass}^\dagger \hat M_{\tt pass}}{\phi_{10}}_{ab} + \sqrt{\xi_{01}\xi_{10}}\mel*{\phi_{01}^\perp}{\hat M_{\tt pass}^\dagger \hat M_{\tt pass}}{\phi_{10}^\perp}_{ab} \\
				&+ \sqrt{(1-\xi_{10})(1-\xi_{01})}\mel*{\phi_{10}}{\hat M_{\tt pass}^\dagger \hat M_{\tt pass}}{\phi_{01}}_{ab} + \sqrt{(1-\xi_{10})\xi_{01}}\mel*{\phi_{10}}{\hat M_{\tt pass}^\dagger \hat M_{\tt pass}}{\phi_{01}^\perp}_{ab} \\
				&+ \sqrt{\xi_{10}(1-\xi_{01})}\mel*{\phi_{10}^\perp}{\hat M_{\tt pass}^\dagger \hat M_{\tt pass}}{\phi_{01}}_{ab} + \sqrt{\xi_{10}\xi_{01}}\mel*{\phi_{10}^\perp}{\hat M_{\tt pass}^\dagger \hat M_{\tt pass}}{\phi_{01}^\perp}_{ab} \\
				&+ (1-\xi_{10})\ev*{\hat M_{\tt pass}^\dagger \hat M_{\tt pass}}{\phi_{10}}_{ab} + \sqrt{\xi_{10}(1-\xi_{10})}\mel*{\phi_{10}}{\hat M_{\tt pass}^\dagger \hat M_{\tt pass}}{\phi_{10}^\perp}_{ab} \\
				&+ \sqrt{\xi_{10}(1-\xi_{10})}\mel*{\phi_{10}^\perp}{\hat M_{\tt pass}^\dagger \hat M_{\tt pass}}{\phi_{10}}_{ab} + \xi_{10}\ev*{\hat M_{\tt pass}^\dagger \hat M_{\tt pass}}{\phi_{10}^\perp}_{ab} \\
				&+ \sqrt{(1-\xi_{11})(1-\xi_{00})}\mel*{\phi_{11}}{\hat M_{\tt pass}^\dagger \hat M_{\tt pass}}{\phi_{00}}_{ab} + \sqrt{(1-\xi_{11})\xi_{00}}\mel*{\phi_{11}}{\hat M_{\tt pass}^\dagger \hat M_{\tt pass}}{\phi_{00}^\perp}_{ab} \\
				&+ \sqrt{\xi_{11}(1-\xi_{00})}\mel*{\phi_{11}^\perp}{\hat M_{\tt pass}^\dagger \hat M_{\tt pass}}{\phi_{00}}_{ab} + \sqrt{\xi_{11}\xi_{00}}\mel*{\phi_{11}^\perp}{\hat M_{\tt pass}^\dagger \hat M_{\tt pass}}{\phi_{00}^\perp}_{ab} \\
				&+ (1-\xi_{11})\ev*{\hat M_{\tt pass}^\dagger \hat M_{\tt pass}}{\phi_{11}}_{ab} + \sqrt{\xi_{11}(1-\xi_{11})}\mel*{\phi_{11}}{\hat M_{\tt pass}^\dagger \hat M_{\tt pass}}{\phi_{11}^\perp}_{ab} \\
				&+ \sqrt{\xi_{11}(1-\xi_{11})}\mel*{\phi_{11}^\perp}{\hat M_{\tt pass}^\dagger \hat M_{\tt pass}}{\phi_{11}}_{ab} + \xi_{11}\ev*{\hat M_{\tt pass}^\dagger \hat M_{\tt pass}}{\phi_{11}^\perp}_{ab}
				\Big].
			\end{aligned}
		\end{equation}
		To obtain the first type of constraints in \cref{eq:sdp_mdi_2}, we have substituted \cref{eq:psi_ij_trick} into \cref{eq:Y_ij_gamma}, and the second, third and fourth type of constraints hold due to \cref{eq:completeness_eq_mdi}.  Also, to obtain the objective function in \cref{eq:Y_ph_def_MDI_expanded}, we have substituted \cref{eq:psi_ij_trick} into \cref{eq:Y_ph_def_MDI}. As in the P\&M case, by defining $G$ as the Gram matrix of the union of the vector  sets $\{\hat M_\gamma \ket{\phi_{ij}}_{ab}\}_{i,j,\gamma}$ and $\{\hat M_\gamma \ket*{\phi_{ij}^\perp}_{ab}\}_{i,j,\gamma}$, it is straightforward to see that \cref{eq:sdp_mdi_2} is an SDP.

		\section{Application to decoy-state protocols}
		\label{app:decoy_state}

		In this Appendix, we extend our analysis to scenarios in which Alice's source emits mixed states with partially characterized eigenvectors, which enables its application to decoy-state protocols. For such protocols, we can incorporate not only imperfections and side channels in the encoding of bit-and-basis information, but also vulnerabilities in the implementation of the decoy-state method itself, such as information leakage of Alice's intensity setting choices. 
		
		{For simplicity, in this Appendix, we focus only on decoy-state P$\&$M protocols. However, our framework can also be applied to decoy-state MDI-type schemes by combining the ideas presented here with the analysis in \cref{app:MDI}. This enables the application of our techniques to the standard decoy-state MDI-QKD protocol \cite{loMeasurementDeviceIndependentQuantum2012}, to twin-field QKD protocols \cite{wang2019twinfield,curtySimpleSecurity2019} and to mode-pairing QKD \cite{zengModepairingQuantum2022,xieBreakingRateLoss2022}.}      
		
		First, in \cref{appsec:decoy_state_general_analysis},
		we present the general version of our approach; then, in \cref{appsec:THA_analysis},
		we particularize it to derive secret-key rates for a decoy-state {BB84} protocol in the presence of a THA against both the intensity modulator and the bit-and-basis encoder. The simulation results for the latter scenario are shown in \cref{subsec:decoy_THA_results}.
		
		\subsection{General analysis}
		\label{appsec:decoy_state_general_analysis}

		Consider a general decoy-state-type scenario in which, when Alice chooses the bit-and-basis setting $j$ and the intensity setting $\mu$, she emits some mixed state $\rho_{j,\mu}$. This {state can be understood as the average over some parameter that is (at least partially) outside Eve's control and knowledge, such as the phase of a coherent state that has undergone a phase-randomization process. A mixed state} always admits an eigendecomposition of the form
		\begin{equation}
			\label{eq:eigendecomp_general_decoy}
			\rho_{j,\mu} = \sum_n p_{n\vert j,\mu} \ketbra*{\psi_{j,\mu}^{(n)}},
		\end{equation}
		where $p_{n \vert j,\mu}$ can be interpreted as the probability to prepare the state $\ket*{\psi_{j,\mu}^{(n)}}$ and $\braket*{\psi_{j,\mu}^{(n')}}{\psi_{j,\mu}^{(n)}} = \delta_{nn'}$.
		Here, we show how to use our analysis to prove security for this type of scenario when the eigenvectors $\ket*{\psi_{j,\mu}^{(n)}}$ are only partially characterized, i.e., they are known to satisfy
		\begin{equation}
			\label{eqn:assumption_general_decoy}
			\abs*{\braket*{\phi_{j,\mu}^{(n)}}{\psi_{j,\mu}^{(n)}}}^2 \geq 1 - \epsilon_{j,\mu}^{(n)},
		\end{equation}
		for known $\ket*{\phi_{j,\mu}^{(n)}}$ and known $\epsilon_{j,\mu}^{(n)}$. For simplicity, in our derivations below, we consider the restricted scenario in which the eigenvalues in \cref{eq:eigendecomp_general_decoy} only depend on $\mu$ but not on $j$, i.e.,
		\begin{equation}
			\label{eq:eigendecomp_general_decoy_2}
			\rho_{j,\mu} = \sum_n p_{n\vert \mu} \ketbra*{\psi_{j,\mu}^{(n)}},
		\end{equation}
		and that these eigenvalues are known. This is the case, for instance, when Alice emits phase-randomized coherent states with known phase distributions and known intensities. However, we remark that our analysis can be applied even if the eigenvalues depend on $j$ and are not known, as long as one has a lower bound on them\footnote{Our analysis can be applied as long as one has a lower bound $p_{n\vert \mu}^{\rm L}$ on $p_{n\vert j,\mu}$, by trivially substituting $p_{n\vert \mu} \to p_{n\vert \mu}^{\rm L}$ in \cref{eq:asymp_SKR_decoy,eq:asymp_SKR_decoy_n1,eq:sdp_yield}.}.

		As shown in \cref{app:proof_emitted_states}, \cref{eqn:assumption_general_decoy} implies that the states $\ket{\psi_{j,\mu}}$ can be assumed to have the form
		\begin{equation}
			\label{eq:psi_j_n_trick}
			\ket*{\psi_{j,\mu}^{(n)}} = \sqrt{1-\epsilon_{j,\mu}^{(n)}} \ket*{\phi_{j,\mu}^{(n)}} + \sqrt{\epsilon_{j,\mu}^{(n)}} \ket*{\phi_{j,\mu}^{(n),\perp}},
		\end{equation}
		where $\ket*{\phi_{j,\mu}^{(n),\perp}}$ is a state orthogonal to $\ket*{\phi_{j,\mu}^{(n)}}$, i.e., it satisfies $\braket*{\phi_{j,\mu}^{(n),\perp}}{\phi_{j,\mu}^{(n)}} = 0$.
		
		For concreteness, we will assume a decoy-state protocol in which Alice and Bob extract their sifted key from the detected events in which both users chose the $Z$ basis and Alice chose a particular intensity setting, say $\mu_0$. A lower bound on the asymptotic secret-key rate $R$ can then be expressed as
		\begin{equation}
			\label{eq:asymp_SKR_decoy}
			R \geq \sum_n p_{n \vert \mu_0} Y_{Z,\mu_0}^{(n)} \big[1-h(e_{\textrm{ph},\mu_0}^{(n)})-fh(e_{Z,\mu_0})\big].
		\end{equation}
		Here, $e_{Z,\mu_0}$ refers to the observed error rate of the sifted key, $Y_{Z,\mu_0}^{(n)}$ denotes the probability that Bob obtains a conclusive bit outcome when Alice emits $\ket*{\psi_{0,\mu_0}^{(n)}}$ or $\ket*{\psi_{1,\mu_0}^{(n)}}$, and $e_{\textrm{ph},\mu_0}^{(n)}$ represents the phase-error rate associated to these latter events. 
		
		To obtain a lower bound on the asymptotic secret-key rate in \cref{eq:asymp_SKR_decoy}, one needs to find a lower bound on $Y_{Z,\mu_0}^{(n)}$ and an upper bound on $e_{\textrm{ph},\mu_0}^{(n)}$ for some values of $n$. Here, for simplicity, we focus on the typical approach in which one extracts secret key only from the states with $n=1$, in which case the lower bound on the secret-key rate becomes
		\begin{equation}
			\label{eq:asymp_SKR_decoy_n1}
			R \geq p_{1 \vert \mu_0} Y_{Z,\mu_0}^{(1)} \big[1-h(e_{\textrm{ph},\mu_0}^{(1)})-fh(e_{Z,\mu_0})\big].
		\end{equation}
		However, we remark that our analysis below can be used to find bounds for any $n$ {by simply changing the superscript $(1)$ to $(n)$ in the objective functions of \cref{eq:sdp_yield,eq:Y_Zmu0ph}}.
		
		A lower bound  $Y_{Z,\mu_0}^{(1),\textrm{L}}$ on $Y_{Z,\mu_0}^{(1)}$ can be obtained by solving the following optimization problem
		\begin{equation}
			\label{eq:sdp_yield}
			\begin{aligned}
				\min \textrm{ }& Y_{Z,\mu_0}^{(1)} := 1-\frac{1}{2}\sum_{j\in\{0,1\}}\ev*{\hat M_{f}^\dagger \hat M_{f}}{\psi_{j,\mu_0}^{(1)}}  \\
				&\textrm{s.t. } \sum_{n=0}^{n_{\rm cut}} p_{n\vert\mu}\ev*{\hat M_{\gamma}^\dagger \hat M_{\gamma}} {\psi_{j,\mu}^{(n)}} \leq  Q_{j,\mu}^{\gamma} \quad \forall j,\mu, \gamma,\\
				&\sum_{n=0}^{n_{\rm cut}} p_{n\vert\mu}\big(1-\ev*{\hat M_{\gamma}^\dagger \hat M_{\gamma}} {\psi_{j,\mu}^{(n)}}\big) \leq  1-Q_{j,\mu}^{\gamma} \quad \forall j,\mu, \gamma,\\
				&\sum_\gamma  \mel*{\phi_{j',\mu'}^{(n')}}{\hat M_{\gamma}^\dagger \hat M_{\gamma}}{\phi_{j,\mu}^{(n)}} = \braket*{\phi_{j',\mu'}^{(n')}}{\phi_{j,\mu}^{(n)}} \quad \forall j,j',\mu,\mu', \forall n,n' \in \{0,1,...,n_\mathrm{cut}\}, \\
				&\sum_\gamma  \mel*{\phi_{j,\mu}^{(n),\perp}}{\hat M_{\gamma}^\dagger \hat M_{\gamma}}{\phi_{j,\mu}^{(n)}} = 0 \quad \forall j,\mu, \forall n \in \{0,1,...,n_\mathrm{cut}\}, \\
				&\sum_\gamma  \ev*{\hat M_{\gamma}^\dagger \hat M_{\gamma}}{\phi_{j,\mu}^{(n),\perp}} = 1 \quad \forall j,\mu, \forall n \in \{0,1,...,n_\mathrm{cut}\}. 
			\end{aligned}
		\end{equation}
		Here, $Q_{j,\mu_0}^{\gamma}$ is the observed rate at which Bob obtains the outcome $\gamma \in \{0_X,1_X,f\}$ when Bob chooses the $X$ basis\footnote{Note that, due to the basis-independent detection efficiency assumption, one can indistinctly use either Bob's $Z$-basis measurement results or Bob's $X$-basis measurement results when defining the SDP to estimate $Y_{Z,\mu_0}^{(1)}$. When estimating $Y_{Z,\mu_0\wedge\mathrm{ph}}^{(1),\textrm{U}}$, one needs to define the SDP using Bob's $X$-basis measurement results. Here, for simplicity, we use Bob's $X$-basis measurement results in both SDPs.} and Alice chooses setting $j$ and intensity $\mu$.  Similar to the case of discrete-modulated protocols in \cref{sec:main_result}, by substituting the expression of $\ket*{\psi_{j,\mu}^{(n)}}$ in \cref{eq:psi_j_n_trick} into \cref{eq:sdp_yield}, and defining $G$ as the Gram matrix of the union of the vector sets $\{\hat M_\gamma \ket*{\phi_{j,\mu}^{(n)}}\}_{j,\mu,\gamma,n}$ and $\{\hat M_\gamma \ket*{\phi_{j,\mu}^{(n),\perp}}\}_{j,\mu,\gamma,n}$, it is easy to see that \cref{eq:sdp_yield} is an SDP.
		
		The phase-error rate is defined as $e_{\rm{ph},\mu_0}^{(1)}=Y_{Z,\mu_0\wedge\mathrm{ph}}^{(1)}/Y_{Z,\mu_0}^{(1)}$, where  $Y_{Z,\mu_0\wedge\mathrm{ph}}^{(1)}$ is defined by substituting $\ket{\psi_j} \to \ket*{\psi_{j,\mu_0}^{(1)}}$ in \cref{eq:Y_ph_def_PM}, i.e.,
		\begin{equation}
			\label{eq:Y_Zmu0ph}
			\begin{split}
				Y_{Z,\mu_0\wedge\mathrm{ph}}^{(1)}
				=& \frac{1}{4} \bigg[\ev*{\hat{M}_{1_X}^\dagger\hat{M}_{1_X}}{\psi_{0,\mu_0}^{(1)}}
				+ \mel*{\psi_{0,\mu_0}^{(1)}}{\hat{M}_{1_X}^\dagger\hat{M}_{1_X}}{\psi_{1,\mu_0}^{(1)}}
				+ \mel*{\psi_{1,\mu_0}^{(1)}}{\hat{M}_{1_X}^\dagger\hat{M}_{1_X}}{\psi_{0,\mu_0}^{(1)}}
				+ \ev*{\hat{M}_{1_X}^\dagger \hat{M}_{1_X}}{\psi_{1,\mu_0}^{(1)}}  \\ 
				&+\ev*{\hat{M}_{0_X}^\dagger\hat{M}_{0_X}}{\psi_{0,\mu_0}^{(1)}}
				- \mel*{\psi_{0,\mu_0}^{(1)}}{\hat{M}_{0_X}^\dagger\hat{M}_{0_X}}{\psi_{1,\mu_0}^{(1)}}
				- \mel*{\psi_{1,\mu_0}^{(1)}}{\hat{M}_{0_X}^\dagger\hat{M}_{0_X}}{\psi_{0,\mu_0}^{(1)}}
				+ \ev*{\hat{M}_{0_X}^\dagger \hat{M}_{0_X}}{\psi_{1,\mu_0}^{(1)}}  \bigg].
			\end{split}
		\end{equation}
		
		An upper bound $Y_{Z,\mu_0\wedge\mathrm{ph}}^{(1),\textrm{U}}$ on $Y_{Z,\mu_0\wedge\mathrm{ph}}^{(1)}$ can be obtained simply by replacing the objective function of the SDP in \cref{eq:sdp_yield} by $\max \textrm{ } Y_{Z,\mu_0\wedge\mathrm{ph}}^{(1)}$, and substituting the expression of $\ket*{\psi_{j,\mu}^{(n)}}$ in \cref{eq:psi_j_n_trick} into the definition of $Y_{Z,\mu_0\wedge\mathrm{ph}}^{(1)}$ in \cref{eq:Y_Zmu0ph}. Then, an upper bound on the phase-error rate is given by
		\begin{equation}
			e_{\rm{ph},\mu_0}^{(1)} \leq \frac{Y_{Z,\mu_0\wedge\mathrm{ph}}^{(1),\textrm{U}}}{Y_{Z,\mu_0}^{(1),\textrm{L}}}.
		\end{equation}
		Thus, a lower bound on the secret-key rate in \cref{eq:asymp_SKR_decoy_n1} can be obtained by simply solving two SDPs.
		
		\subsection{Analysis of a THA against the intensity modulator and the bit-and-basis encoder}
		\label{appsec:THA_analysis}
		
		As an example to illustrate the applicability of our analysis to realistic scenarios, here, we consider a practical decoy-state case in which Eve performs a THA against the intensity modulator and the bit-and-basis encoder, which allows her to learn information about Alice's setting choices $j$ and $\mu$. For concreteness, we consider a standard decoy-state BB84-type protocol in which Eve's THA is the only imperfection present, although, as already discussed, our analysis could incorporate other types of flaws too, such as SPFs.
		Remarkably, the only information needed to apply our analysis to this scenario is an upper bound $I_{\rm max}$ on the intensity of the back-reflected light induced by Eve's THA. That is, no information about the state of the injected light or the back-reflected light is required.
		
		In this situation, the full emitted states when Alice chooses $j$ and $\mu$ can be expressed as
		\begin{equation}
			\label{eq:rho_j_mu_THA}
			\rho_{j,\mu} = \rho_{j,\mu}^{\rm ideal} \otimes \ketbra{\lambda_{j,\mu}}_E,
		\end{equation}
		where%
		\begin{equation}
			\label{eq:ideal_PRWCP}
			\rho_{j,\mu}^{\rm ideal} = \sum_{n=0}^{\infty}  p_{n \vert \mu} \ketbra{n_j} 
		\end{equation}
		is an ideal phase-randomized weak coherent pulse, with $\ket{n_j}$ being an ideal $n$-photon state with bit-and-basis encoding $j$ and 
		\begin{equation}
			p_{n|\mu}=e^{-\mu}\frac{\mu^n}{n!}
		\end{equation}
		follows a Poisson distribution. Also, in \cref{eq:rho_j_mu_THA}, $\ket{\lambda_{j,\mu}}_E$ is the state of the back-reflected light available to Eve, which contains information about $j$ and $\mu$, and which can be assumed to be pure without loss of generality. The specific quantum state of $\ket{\lambda_{j,\mu}}_E$ is uncharacterized but, by assumption, we have an upper bound $I_{\rm max}$ on its mean photon number. Using this fact, one can show that (see Ref.~\cite[Appendix C]{pereiraModifiedBB842023})
		\begin{equation}
			\label{eq:vacuum_bound_imax}
			\abs*{\braket{\mathrm{vac}}{\lambda_{j,\mu}}_E}^2 \geq 1-I_{\rm max},
		\end{equation}
		where {$\ket{\mathrm{vac}}_E$} is the vacuum state of the back-reflected system $E$, which cannot contain any information about Alice's setting choices. As shown in \cref{app:proof_emitted_states}, \cref{eq:vacuum_bound_imax} implies that, without loss of generality, the state of the back-reflected light can be assumed to have the following form
		\begin{equation}
			\label{eq:lambda_assumed_form}
			\ket{\lambda_{j,\mu}}_E = \sqrt{1-I_{\rm max}} \ket{\mathrm{vac}}_E + \sqrt{I_{\rm max}} \ket{\Omega_{j,\mu}}_E,
		\end{equation}
		where $\ket{\Omega_{j,\mu}}_E$ is a state such that $\braket{\mathrm{vac}}{\Omega_{j,\mu}}_E = 0$. Importantly, this state contains photons and therefore may leak information about Alice's setting choices $j$ and $\mu$. 
		
		Using \cref{eq:ideal_PRWCP,eq:lambda_assumed_form}, we can rewrite the state $\rho_{j,\mu}$ in \cref{eq:rho_j_mu_THA} as given in \cref{eq:eigendecomp_general_decoy_2}. For this, we define
		\begin{equation}
			\ket*{\psi_{j,\mu}^{(n)}} = \sqrt{1-I_{\rm max}} \ket*{\phi_{j}^{(n)}} + \sqrt{I_{\rm max}} \ket*{\phi_{j,\mu}^{(n),\perp}}
		\end{equation}
		with
		\begin{equation}
			\begin{aligned}
				\ket*{\phi_{j}^{(n)}} &= \ket{n_j} \otimes \ket{\mathrm{vac}}_E, \\
				\ket*{\phi_{j,\mu}^{(n),\perp}} &= \ket{n_j} \otimes \ket{\Omega_{j,\mu}}_E,
			\end{aligned}
		\end{equation}
		and $\braket*{\phi_{j,\mu}^{(n),\perp}}{\phi_{j}^{(n)}} = 0$. This implies that one can apply the analysis in \cref{appsec:decoy_state_general_analysis} to evaluate the secret-key rate obtainable in this scenario. In \cref{subsec:decoy_THA_results}, we do so for different values of $I_{\rm max}$.

		\section{Justification of \cref{eq:psi_j_trick}}
		\label{app:proof_emitted_states}
		
		{Here, we provide a rigorous justification for our assumption that, in the case of partial state characterization, Alice's states have the form given by \cref{eq:psi_j_trick}. While this proof has been originally introduced in Ref.~\cite[Supplementary Methods 1]{curras-lorenzoSecurityFramework2023}, we reproduce it here for completeness.
			
			\begin{lemma}[\cite{curras-lorenzoSecurityFramework2023}]
				Consider a QKD protocol in which Alice prepares some states $\{\ket{\psi_j}_a\}_j$, with $j$ representing her setting choice. Assume that these states satisfy
				\begin{equation}
					\label{eqapp:assumption}
					\abs{\braket{\phi_j}{\psi_{j}}_a}^{{2}} \geq 1 - \epsilon_j,  
				\end{equation}
				where $\{\ket{\phi_j}_a\}_j$ is another set of states. Then, when proving security, Alice's emitted states can be assumed to have the form
				\begin{equation}
					\label{eqapp:objective}
					\ket{\psi_j}_a = \sqrt{1-\epsilon_j} \ket{\phi_j}_a + \sqrt{\epsilon_j} \ket*{\phi_j^\perp}_a,
				\end{equation}
				where $\ket*{\phi_j^\perp}_a$ is a state such that $\braket*{\phi_j^\perp}{\phi_j}_a = 0$, since this never underestimates the information available to Eve.
			\end{lemma}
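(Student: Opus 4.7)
The plan is to show that restricting attention to states of the specific form \cref{eqapp:objective} is a pessimistic assumption---i.e., it only enlarges Eve's effective side information---by explicitly constructing a "worse" protocol in an enlarged Hilbert space whose effective emitted state takes exactly that form.

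First, I would decompose the actual state in the orthogonal plane determined by $\ket{\phi_j}_a$. Absorbing an overall phase into the orthogonal component, this gives
\begin{equation}
\ket{\psi_j}_a = \sqrt{1-\epsilon_j'}\,\ket{\phi_j}_a + \sqrt{\epsilon_j'}\,\ket{\chi_j}_a,
\end{equation}
with $\braket{\chi_j}{\phi_j}_a = 0$ and $0 \leq \epsilon_j' \leq \epsilon_j$ directly from \cref{eqapp:assumption}. The case $\epsilon_j' = \epsilon_j$ is immediate upon identifying $\ket*{\phi_j^\perp}_a \coloneqq \ket{\chi_j}_a$, so the substantive work lies in the case $\epsilon_j' < \epsilon_j$, where the true overlap strictly exceeds the one guaranteed by the fidelity bound.

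To handle this case, I would introduce an ancillary system $E'$ prepared by Alice but, in the worst case, assumed to be accessible to Eve. Viewing the actual emission as $\ket{\psi_j}_a\ket{0}_{E'}$ inside the enlarged space $\mathcal{H}_a \otimes \mathcal{H}_{E'}$, I would construct a modified preparation $\ket{\Psi_j}_{aE'}$ whose overlap with the enlarged reference $\ket{\phi_j}_a\ket{0}_{E'}$ equals exactly $\sqrt{1-\epsilon_j}$ and whose orthogonal complement is some vector of norm $\sqrt{\epsilon_j}$ in the enlarged space. By construction, $\ket{\Psi_j}_{aE'}$ takes precisely the form \cref{eqapp:objective}, with the orthogonal piece playing the role of $\ket*{\phi_j^\perp}_a$ on the enlarged system; the freedom in choosing this orthogonal piece then becomes the free variable that the SDP in \cref{eq:sdp_2} is built to optimize over.

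The main technical obstacle---and the step I expect to require the most care---is rigorously justifying that replacing the actual scenario by this modified one never underestimates Eve's information. The argument combines two ingredients: on the one hand, by the data-processing inequality, handing Eve the ancilla $E'$ can only increase her distinguishability of the setting values $j$, so pretending Alice emits the enlarged state is automatically pessimistic; on the other hand, the orthogonal piece in \cref{eqapp:objective} enters the SDP \cref{eq:sdp_2} as an unconstrained variable, so the optimization implicitly already selects the most adversarial such piece. Together, these imply that any upper bound on the phase-error rate derived under the assumption \cref{eqapp:objective} is an upper bound for any actual protocol satisfying the fidelity assumption \cref{eqapp:assumption}, completing the proof.
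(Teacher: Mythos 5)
Your proposal is correct and follows essentially the same route as the paper: write $\ket{\psi_j}_a$ with its true overlap $\sqrt{1-\epsilon'_j}$, $0\le\epsilon'_j\le\epsilon_j$ (absorbing the global phase), and when $\epsilon'_j<\epsilon_j$ tensor on an ancilla (the paper's fictitious system $F$, your $E'$) that Eve may access, chosen so that the overlap with the enlarged reference $\ket{\phi_j}_a\ket{0}$ is exactly $\sqrt{1-\epsilon_j}$. The paper simply makes your construction explicit by taking the enlarged emission to be the product state $\ket{\psi_j}_a\otimes\bigl(\sqrt{(1-\epsilon_j)/(1-\epsilon'_j)}\,\ket{0}_F+\sqrt{1-(1-\epsilon_j)/(1-\epsilon'_j)}\,\ket{1}_F\bigr)$, whose marginal on $a$ is the actual state --- which is exactly what makes your ``Eve can ignore the ancilla'' (data-processing) step immediate.
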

			\begin{proof}
				Without loss of generality, \cref{eqapp:assumption} directly implies that the emitted states can be expressed as
				\begin{equation}
					\ket{\psi_j}_a = e^{i \varphi_j}(\sqrt{1-\epsilon'_j} \ket{\phi_j}_a + \sqrt{\epsilon'_j} \ket*{\phi_j^\perp}_a){,}
					\label{eq:psi_phase}
				\end{equation}
				%
				where $0\leq \epsilon'_j \leq \epsilon_j$, $\varphi_j \in [0,2\pi)$ and $\ket*{\phi_j^\perp}_a$ is a state such that $\braket*{\phi_j^\perp}{\phi_j}_a = 0$. However, since the global phases $\varphi_j$ have no physical meaning, we can consider that $\varphi_j = 0$ and therefore
				\begin{equation}
					\label{eqapp:emitted_states_assumed_form}
					{\ket{\psi_j}_a =\sqrt{1-\epsilon'_j} \ket{\phi_j}_a + \sqrt{\epsilon'_j} \ket*{\phi_j^\perp}_a.}
				\end{equation}
				{Next, we show that one can simply consider that $\epsilon'_j = \epsilon_j$, since this represents an advantageous scenario for Eve.} To see this, consider a fictitious scenario in which,  instead of {emitting} the states $\{\ket{\psi_j}_{a}\}_j$ in \cref{eqapp:emitted_states_assumed_form}, Alice emits the states $\{\ket*{\psi_j}_{aF}\}_{j}$ defined as
				\begin{equation}
					\label{eqapp:psi_prime}
					\begin{gathered}
						\ket*{\psi_j}_{aF} =	\ket{\psi_j}_a \otimes \Bigg[\sqrt{\frac{1-\epsilon_j}{1-\epsilon'_j}} \ket{0}_F +  {\sqrt{1-\frac{1-\epsilon_j}{1-\epsilon'_j}}} \ket{1}_F \Bigg],\\
					\end{gathered}
				\end{equation}
				with $\{\ket{0}_F,\ket{1}_F\}$ forming an {orthonormal} basis for the fictitious system $F$. This fictitious scenario cannot be disadvantageous for Eve, as she could simply ignore system $F$ and execute the same attack as she would have executed for the actual protocol. Importantly, we can rewrite \cref{eqapp:psi_prime} as
				\begin{align}
					\ket*{\psi_j}_{aF} = \sqrt{1-\epsilon_j} \ket{\phi_j}_{aF} + \sqrt{\epsilon_j} \ket*{\phi_j^\perp}_{aF},
				\end{align}
				where we have defined the normalized states $\ket{\phi_j}_{aF} \coloneqq \ket{\phi_j}_{a} \ket{0}_F$ and 
				\begin{equation}
					\ket*{\phi_j^\perp}_{aF} = \frac{\sqrt{\epsilon_j - \epsilon'_j}}{\sqrt{\epsilon_j}} \ket{\phi_j}_a\ket{1}_F 
					+ \frac{\sqrt{\epsilon'_j}}{\sqrt{\epsilon_j}}  \ket*{\phi_j^\perp}_a \otimes \left[ \sqrt{\frac{1-\epsilon_j}{1-\epsilon'_j}} \ket{0}_F 
					+ \sqrt{\frac{\epsilon_j - \epsilon'_j}{1-\epsilon'_j}} \ket{1}_F \right].
				\end{equation}
				
				It is easy to see that this construction mantains the required orthogonality condition $\braket*{\phi_j^\perp}{\phi_j}_{aF} = 0$. Therefore, by renaming systems $aF$ as simply $a$, we obtain \cref{eqapp:objective}, {as desired.} {The advantage of \cref{eqapp:objective} with respect to \cref{eqapp:emitted_states_assumed_form} is that, in the former,} the coefficient of the state $\ket{\phi_j}_{a}$ is known exactly, which is needed to run the SDP in \cref{eq:sdp_2}.
		\end{proof}}


\begin{thebibliography}{45}%
		\makeatletter
		\providecommand \@ifxundefined [1]{%
			\@ifx{#1\undefined}
		}%
		\providecommand \@ifnum [1]{%
			\ifnum #1\expandafter \@firstoftwo
			\else \expandafter \@secondoftwo
			\fi
		}%
		\providecommand \@ifx [1]{%
			\ifx #1\expandafter \@firstoftwo
			\else \expandafter \@secondoftwo
			\fi
		}%
		\providecommand \natexlab [1]{#1}%
		\providecommand \enquote  [1]{``#1''}%
		\providecommand \bibnamefont  [1]{#1}%
		\providecommand \bibfnamefont [1]{#1}%
		\providecommand \citenamefont [1]{#1}%
		\providecommand \href@noop [0]{\@secondoftwo}%
		\providecommand \href [0]{\begingroup \@sanitize@url \@href}%
		\providecommand \@href[1]{\@@startlink{#1}\@@href}%
		\providecommand \@@href[1]{\endgroup#1\@@endlink}%
		\providecommand \@sanitize@url [0]{\catcode `\\12\catcode `\$12\catcode `\&12\catcode `\#12\catcode `\^12\catcode `\_12\catcode `\%12\relax}%
		\providecommand \@@startlink[1]{}%
		\providecommand \@@endlink[0]{}%
		\providecommand \url  [0]{\begingroup\@sanitize@url \@url }%
		\providecommand \@url [1]{\endgroup\@href {#1}{\urlprefix }}%
		\providecommand \urlprefix  [0]{URL }%
		\providecommand \Eprint [0]{\href }%
		\providecommand \doibase [0]{https://doi.org/}%
		\providecommand \selectlanguage [0]{\@gobble}%
		\providecommand \bibinfo  [0]{\@secondoftwo}%
		\providecommand \bibfield  [0]{\@secondoftwo}%
		\providecommand \translation [1]{[#1]}%
		\providecommand \BibitemOpen [0]{}%
		\providecommand \bibitemStop [0]{}%
		\providecommand \bibitemNoStop [0]{.\EOS\space}%
		\providecommand \EOS [0]{\spacefactor3000\relax}%
		\providecommand \BibitemShut  [1]{\csname bibitem#1\endcsname}%
		\let\auto@bib@innerbib\@empty
		\bibitem [{\citenamefont {{{Federal Office for Information Security (BSI), Germany}}}(2023)}]{ImplementationAttacksBSI}%
		\BibitemOpen
		\bibfield  {author} {\bibinfo {author} {\bibnamefont {{{Federal Office for Information Security (BSI), Germany}}}},\ }\href {https://www.bsi.bund.de/EN/Service-Navi/Publikationen/Studien/QKD-Systems/Implementation_Attacks_QKD_Systems_node.html} {\bibinfo {title} {A study on implementation attacks against {QKD} systems}} (\bibinfo {year} {2023})\BibitemShut {NoStop}%
		\bibitem [{\citenamefont {Gottesman}\ \emph {et~al.}(2004)\citenamefont {Gottesman}, \citenamefont {Lo}, \citenamefont {L{\"u}tkenhaus},\ and\ \citenamefont {Preskill}}]{gottesmanSecurityQuantum2004}%
		\BibitemOpen
		\bibfield  {author} {\bibinfo {author} {\bibfnamefont {D.}~\bibnamefont {Gottesman}}, \bibinfo {author} {\bibfnamefont {H.-K.}\ \bibnamefont {Lo}}, \bibinfo {author} {\bibfnamefont {N.}~\bibnamefont {L{\"u}tkenhaus}},\ and\ \bibinfo {author} {\bibfnamefont {J.}~\bibnamefont {Preskill}},\ }\bibfield  {title} {\bibinfo {title} {Security of quantum key distribution with imperfect devices},\ }\href@noop {} {\bibfield  {journal} {\bibinfo  {journal} {Quantum Inf. Comput.}\ }\textbf {\bibinfo {volume} {4}},\ \bibinfo {pages} {325} (\bibinfo {year} {2004})}\BibitemShut {NoStop}%
		\bibitem [{\citenamefont {Lo}\ and\ \citenamefont {Preskill}(2007)}]{loSecurityQuantum2007}%
		\BibitemOpen
		\bibfield  {author} {\bibinfo {author} {\bibfnamefont {H.-K.}\ \bibnamefont {Lo}}\ and\ \bibinfo {author} {\bibfnamefont {J.}~\bibnamefont {Preskill}},\ }\bibfield  {title} {\bibinfo {title} {Security of quantum key distribution using weak coherent states with nonrandom phases},\ }\href@noop {} {\bibfield  {journal} {\bibinfo  {journal} {Quantum Inf. Comput.}\ }\textbf {\bibinfo {volume} {7}},\ \bibinfo {pages} {431} (\bibinfo {year} {2007})}\BibitemShut {NoStop}%
		\bibitem [{\citenamefont {Tamaki}\ \emph {et~al.}(2014)\citenamefont {Tamaki}, \citenamefont {Curty}, \citenamefont {Kato}, \citenamefont {Lo},\ and\ \citenamefont {Azuma}}]{tamakiLosstolerantQuantum2014}%
		\BibitemOpen
		\bibfield  {author} {\bibinfo {author} {\bibfnamefont {K.}~\bibnamefont {Tamaki}}, \bibinfo {author} {\bibfnamefont {M.}~\bibnamefont {Curty}}, \bibinfo {author} {\bibfnamefont {G.}~\bibnamefont {Kato}}, \bibinfo {author} {\bibfnamefont {H.-K.}\ \bibnamefont {Lo}},\ and\ \bibinfo {author} {\bibfnamefont {K.}~\bibnamefont {Azuma}},\ }\bibfield  {title} {\bibinfo {title} {Loss-tolerant quantum cryptography with imperfect sources},\ }\href {https://doi.org/10.1103/PhysRevA.90.052314} {\bibfield  {journal} {\bibinfo  {journal} {Phys. Rev. A}\ }\textbf {\bibinfo {volume} {90}},\ \bibinfo {pages} {052314} (\bibinfo {year} {2014})}\BibitemShut {NoStop}%
		\bibitem [{\citenamefont {Pereira}\ \emph {et~al.}(2019)\citenamefont {Pereira}, \citenamefont {Curty},\ and\ \citenamefont {Tamaki}}]{pereiraQuantumKey2019}%
		\BibitemOpen
		\bibfield  {author} {\bibinfo {author} {\bibfnamefont {M.}~\bibnamefont {Pereira}}, \bibinfo {author} {\bibfnamefont {M.}~\bibnamefont {Curty}},\ and\ \bibinfo {author} {\bibfnamefont {K.}~\bibnamefont {Tamaki}},\ }\bibfield  {title} {\bibinfo {title} {Quantum key distribution with flawed and leaky sources},\ }\href {https://doi.org/10.1038/s41534-019-0180-9} {\bibfield  {journal} {\bibinfo  {journal} {npj Quantum Inf}\ }\textbf {\bibinfo {volume} {5}},\ \bibinfo {pages} {62} (\bibinfo {year} {2019})}\BibitemShut {NoStop}%
		\bibitem [{\citenamefont {Pereira}\ \emph {et~al.}(2023)\citenamefont {Pereira}, \citenamefont {{Curr{\'a}s-Lorenzo}}, \citenamefont {Navarrete}, \citenamefont {Mizutani}, \citenamefont {Kato}, \citenamefont {Curty},\ and\ \citenamefont {Tamaki}}]{pereiraModifiedBB842023}%
		\BibitemOpen
		\bibfield  {author} {\bibinfo {author} {\bibfnamefont {M.}~\bibnamefont {Pereira}}, \bibinfo {author} {\bibfnamefont {G.}~\bibnamefont {{Curr{\'a}s-Lorenzo}}}, \bibinfo {author} {\bibfnamefont {{\'A}.}~\bibnamefont {Navarrete}}, \bibinfo {author} {\bibfnamefont {A.}~\bibnamefont {Mizutani}}, \bibinfo {author} {\bibfnamefont {G.}~\bibnamefont {Kato}}, \bibinfo {author} {\bibfnamefont {M.}~\bibnamefont {Curty}},\ and\ \bibinfo {author} {\bibfnamefont {K.}~\bibnamefont {Tamaki}},\ }\bibfield  {title} {\bibinfo {title} {Modified {{BB84}} quantum key distribution protocol robust to source imperfections},\ }\href {https://doi.org/10.1103/PhysRevResearch.5.023065} {\bibfield  {journal} {\bibinfo  {journal} {Phys. Rev. Res.}\ }\textbf {\bibinfo {volume} {5}},\ \bibinfo {pages} {023065} (\bibinfo {year} {2023})}\BibitemShut {NoStop}%
		\bibitem [{\citenamefont {Pereira}\ \emph {et~al.}(2020)\citenamefont {Pereira}, \citenamefont {Kato}, \citenamefont {Mizutani}, \citenamefont {Curty},\ and\ \citenamefont {Tamaki}}]{pereiraQuantumKey2020}%
		\BibitemOpen
		\bibfield  {author} {\bibinfo {author} {\bibfnamefont {M.}~\bibnamefont {Pereira}}, \bibinfo {author} {\bibfnamefont {G.}~\bibnamefont {Kato}}, \bibinfo {author} {\bibfnamefont {A.}~\bibnamefont {Mizutani}}, \bibinfo {author} {\bibfnamefont {M.}~\bibnamefont {Curty}},\ and\ \bibinfo {author} {\bibfnamefont {K.}~\bibnamefont {Tamaki}},\ }\bibfield  {title} {\bibinfo {title} {Quantum key distribution with correlated sources},\ }\href {https://doi.org/10.1126/sciadv.aaz4487} {\bibfield  {journal} {\bibinfo  {journal} {Sci. Adv.}\ }\textbf {\bibinfo {volume} {6}},\ \bibinfo {pages} {eaaz4487} (\bibinfo {year} {2020})}\BibitemShut {NoStop}%
		\bibitem [{\citenamefont {{Curr{\'a}s-Lorenzo}}\ \emph {et~al.}(2023)\citenamefont {{Curr{\'a}s-Lorenzo}}, \citenamefont {Pereira}, \citenamefont {Kato}, \citenamefont {Curty},\ and\ \citenamefont {Tamaki}}]{curras-lorenzoSecurityFramework2023}%
		\BibitemOpen
		\bibfield  {author} {\bibinfo {author} {\bibfnamefont {G.}~\bibnamefont {{Curr{\'a}s-Lorenzo}}}, \bibinfo {author} {\bibfnamefont {M.}~\bibnamefont {Pereira}}, \bibinfo {author} {\bibfnamefont {G.}~\bibnamefont {Kato}}, \bibinfo {author} {\bibfnamefont {M.}~\bibnamefont {Curty}},\ and\ \bibinfo {author} {\bibfnamefont {K.}~\bibnamefont {Tamaki}},\ }\href@noop {} {\bibinfo {title} {A security framework for quantum key distribution implementations}} (\bibinfo {year} {2023}),\ \Eprint {https://arxiv.org/abs/2305.05930} {arXiv:2305.05930 [quant-ph]} \BibitemShut {NoStop}%
		\bibitem [{\citenamefont {Coles}\ \emph {et~al.}(2016)\citenamefont {Coles}, \citenamefont {Metodiev},\ and\ \citenamefont {L{\"u}tkenhaus}}]{colesNumericalApproach2016}%
		\BibitemOpen
		\bibfield  {author} {\bibinfo {author} {\bibfnamefont {P.~J.}\ \bibnamefont {Coles}}, \bibinfo {author} {\bibfnamefont {E.~M.}\ \bibnamefont {Metodiev}},\ and\ \bibinfo {author} {\bibfnamefont {N.}~\bibnamefont {L{\"u}tkenhaus}},\ }\bibfield  {title} {\bibinfo {title} {Numerical approach for unstructured quantum key distribution},\ }\href {https://doi.org/10.1038/ncomms11712} {\bibfield  {journal} {\bibinfo  {journal} {Nat Commun}\ }\textbf {\bibinfo {volume} {7}},\ \bibinfo {pages} {11712} (\bibinfo {year} {2016})}\BibitemShut {NoStop}%
		\bibitem [{\citenamefont {Winick}\ \emph {et~al.}(2018)\citenamefont {Winick}, \citenamefont {L{\"u}tkenhaus},\ and\ \citenamefont {Coles}}]{winickReliableNumerical2018}%
		\BibitemOpen
		\bibfield  {author} {\bibinfo {author} {\bibfnamefont {A.}~\bibnamefont {Winick}}, \bibinfo {author} {\bibfnamefont {N.}~\bibnamefont {L{\"u}tkenhaus}},\ and\ \bibinfo {author} {\bibfnamefont {P.~J.}\ \bibnamefont {Coles}},\ }\bibfield  {title} {\bibinfo {title} {Reliable numerical key rates for quantum key distribution},\ }\href {https://doi.org/10.22331/q-2018-07-26-77} {\bibfield  {journal} {\bibinfo  {journal} {Quantum}\ }\textbf {\bibinfo {volume} {2}},\ \bibinfo {pages} {77} (\bibinfo {year} {2018})}\BibitemShut {NoStop}%
		\bibitem [{\citenamefont {Wang}\ \emph {et~al.}(2019)\citenamefont {Wang}, \citenamefont {Primaatmaja}, \citenamefont {Lavie}, \citenamefont {Varvitsiotis},\ and\ \citenamefont {Lim}}]{wangCharacterisingCorrelations2019}%
		\BibitemOpen
		\bibfield  {author} {\bibinfo {author} {\bibfnamefont {Y.}~\bibnamefont {Wang}}, \bibinfo {author} {\bibfnamefont {I.~W.}\ \bibnamefont {Primaatmaja}}, \bibinfo {author} {\bibfnamefont {E.}~\bibnamefont {Lavie}}, \bibinfo {author} {\bibfnamefont {A.}~\bibnamefont {Varvitsiotis}},\ and\ \bibinfo {author} {\bibfnamefont {C.~C.~W.}\ \bibnamefont {Lim}},\ }\bibfield  {title} {\bibinfo {title} {Characterising the correlations of prepare-and-measure quantum networks},\ }\href {https://doi.org/10.1038/s41534-019-0133-3} {\bibfield  {journal} {\bibinfo  {journal} {npj Quantum Inf}\ }\textbf {\bibinfo {volume} {5}},\ \bibinfo {pages} {1} (\bibinfo {year} {2019})}\BibitemShut {NoStop}%
		\bibitem [{\citenamefont {Primaatmaja}\ \emph {et~al.}(2019)\citenamefont {Primaatmaja}, \citenamefont {Lavie}, \citenamefont {Goh}, \citenamefont {Wang},\ and\ \citenamefont {Lim}}]{primaatmajaVersatileSecurity2019}%
		\BibitemOpen
		\bibfield  {author} {\bibinfo {author} {\bibfnamefont {I.~W.}\ \bibnamefont {Primaatmaja}}, \bibinfo {author} {\bibfnamefont {E.}~\bibnamefont {Lavie}}, \bibinfo {author} {\bibfnamefont {K.~T.}\ \bibnamefont {Goh}}, \bibinfo {author} {\bibfnamefont {C.}~\bibnamefont {Wang}},\ and\ \bibinfo {author} {\bibfnamefont {C.~C.~W.}\ \bibnamefont {Lim}},\ }\bibfield  {title} {\bibinfo {title} {Versatile security analysis of measurement-device-independent quantum key distribution},\ }\href {https://doi.org/10.1103/PhysRevA.99.062332} {\bibfield  {journal} {\bibinfo  {journal} {Phys. Rev. A}\ }\textbf {\bibinfo {volume} {99}},\ \bibinfo {pages} {062332} (\bibinfo {year} {2019})}\BibitemShut {NoStop}%
		\bibitem [{\citenamefont {Hu}\ \emph {et~al.}(2022)\citenamefont {Hu}, \citenamefont {Im}, \citenamefont {Lin}, \citenamefont {L{\"u}tkenhaus},\ and\ \citenamefont {Wolkowicz}}]{huRobustInterior2022}%
		\BibitemOpen
		\bibfield  {author} {\bibinfo {author} {\bibfnamefont {H.}~\bibnamefont {Hu}}, \bibinfo {author} {\bibfnamefont {J.}~\bibnamefont {Im}}, \bibinfo {author} {\bibfnamefont {J.}~\bibnamefont {Lin}}, \bibinfo {author} {\bibfnamefont {N.}~\bibnamefont {L{\"u}tkenhaus}},\ and\ \bibinfo {author} {\bibfnamefont {H.}~\bibnamefont {Wolkowicz}},\ }\bibfield  {title} {\bibinfo {title} {Robust {{Interior Point Method}} for {{Quantum Key Distribution Rate Computation}}},\ }\href {https://doi.org/10.22331/q-2022-09-08-792} {\bibfield  {journal} {\bibinfo  {journal} {Quantum}\ }\textbf {\bibinfo {volume} {6}},\ \bibinfo {pages} {792} (\bibinfo {year} {2022})}\BibitemShut {NoStop}%
		\bibitem [{\citenamefont {Ara{\'u}jo}\ \emph {et~al.}(2023)\citenamefont {Ara{\'u}jo}, \citenamefont {Huber}, \citenamefont {Navascu{\'e}s}, \citenamefont {Pivoluska},\ and\ \citenamefont {Tavakoli}}]{araujoQuantumKey2023}%
		\BibitemOpen
		\bibfield  {author} {\bibinfo {author} {\bibfnamefont {M.}~\bibnamefont {Ara{\'u}jo}}, \bibinfo {author} {\bibfnamefont {M.}~\bibnamefont {Huber}}, \bibinfo {author} {\bibfnamefont {M.}~\bibnamefont {Navascu{\'e}s}}, \bibinfo {author} {\bibfnamefont {M.}~\bibnamefont {Pivoluska}},\ and\ \bibinfo {author} {\bibfnamefont {A.}~\bibnamefont {Tavakoli}},\ }\bibfield  {title} {\bibinfo {title} {Quantum key distribution rates from semidefinite programming},\ }\href {https://doi.org/10.22331/q-2023-05-24-1019} {\bibfield  {journal} {\bibinfo  {journal} {Quantum}\ }\textbf {\bibinfo {volume} {7}},\ \bibinfo {pages} {1019} (\bibinfo {year} {2023})}\BibitemShut {NoStop}%
		\bibitem [{\citenamefont {Zhou}\ \emph {et~al.}(2022)\citenamefont {Zhou}, \citenamefont {Sasaki},\ and\ \citenamefont {Koashi}}]{zhouNumericalMethod2022}%
		\BibitemOpen
		\bibfield  {author} {\bibinfo {author} {\bibfnamefont {H.}~\bibnamefont {Zhou}}, \bibinfo {author} {\bibfnamefont {T.}~\bibnamefont {Sasaki}},\ and\ \bibinfo {author} {\bibfnamefont {M.}~\bibnamefont {Koashi}},\ }\bibfield  {title} {\bibinfo {title} {Numerical method for finite-size security analysis of quantum key distribution},\ }\href {https://doi.org/10.1103/PhysRevResearch.4.033126} {\bibfield  {journal} {\bibinfo  {journal} {Phys. Rev. Res.}\ }\textbf {\bibinfo {volume} {4}},\ \bibinfo {pages} {033126} (\bibinfo {year} {2022})}\BibitemShut {NoStop}%
		\bibitem [{\citenamefont {Lo}\ \emph {et~al.}(2012)\citenamefont {Lo}, \citenamefont {Curty},\ and\ \citenamefont {Qi}}]{loMeasurementDeviceIndependentQuantum2012}%
		\BibitemOpen
		\bibfield  {author} {\bibinfo {author} {\bibfnamefont {H.-K.}\ \bibnamefont {Lo}}, \bibinfo {author} {\bibfnamefont {M.}~\bibnamefont {Curty}},\ and\ \bibinfo {author} {\bibfnamefont {B.}~\bibnamefont {Qi}},\ }\bibfield  {title} {\bibinfo {title} {Measurement-{{Device-Independent Quantum Key Distribution}}},\ }\href {https://doi.org/10.1103/PhysRevLett.108.130503} {\bibfield  {journal} {\bibinfo  {journal} {Phys. Rev. Lett.}\ }\textbf {\bibinfo {volume} {108}},\ \bibinfo {pages} {130503} (\bibinfo {year} {2012})}\BibitemShut {NoStop}%
		\bibitem [{\citenamefont {Lo}\ \emph {et~al.}(2005)\citenamefont {Lo}, \citenamefont {Ma},\ and\ \citenamefont {Chen}}]{loDecoyState2005}%
		\BibitemOpen
		\bibfield  {author} {\bibinfo {author} {\bibfnamefont {H.-K.}\ \bibnamefont {Lo}}, \bibinfo {author} {\bibfnamefont {X.}~\bibnamefont {Ma}},\ and\ \bibinfo {author} {\bibfnamefont {K.}~\bibnamefont {Chen}},\ }\bibfield  {title} {\bibinfo {title} {Decoy {{State Quantum Key Distribution}}},\ }\href {https://doi.org/10.1103/PhysRevLett.94.230504} {\bibfield  {journal} {\bibinfo  {journal} {Phys. Rev. Lett.}\ }\textbf {\bibinfo {volume} {94}},\ \bibinfo {pages} {230504} (\bibinfo {year} {2005})}\BibitemShut {NoStop}%
		\bibitem [{\citenamefont {Ma}\ \emph {et~al.}(2005)\citenamefont {Ma}, \citenamefont {Qi}, \citenamefont {Zhao},\ and\ \citenamefont {Lo}}]{maPracticalDecoy2005}%
		\BibitemOpen
		\bibfield  {author} {\bibinfo {author} {\bibfnamefont {X.}~\bibnamefont {Ma}}, \bibinfo {author} {\bibfnamefont {B.}~\bibnamefont {Qi}}, \bibinfo {author} {\bibfnamefont {Y.}~\bibnamefont {Zhao}},\ and\ \bibinfo {author} {\bibfnamefont {H.-K.}\ \bibnamefont {Lo}},\ }\bibfield  {title} {\bibinfo {title} {Practical decoy state for quantum key distribution},\ }\href {https://doi.org/10.1103/PhysRevA.72.012326} {\bibfield  {journal} {\bibinfo  {journal} {Phys. Rev. A}\ }\textbf {\bibinfo {volume} {72}},\ \bibinfo {pages} {012326} (\bibinfo {year} {2005})}\BibitemShut {NoStop}%
		\bibitem [{\citenamefont {Hwang}(2003)}]{hwangQuantumKey2003}%
		\BibitemOpen
		\bibfield  {author} {\bibinfo {author} {\bibfnamefont {W.-Y.}\ \bibnamefont {Hwang}},\ }\bibfield  {title} {\bibinfo {title} {Quantum key distribution with high loss: {{Toward}} global secure communication},\ }\href {https://doi.org/10.1103/physrevlett.91.057901} {\bibfield  {journal} {\bibinfo  {journal} {Phys. Rev. Lett.}\ }\textbf {\bibinfo {volume} {91}},\ \bibinfo {pages} {057901} (\bibinfo {year} {2003})}\BibitemShut {NoStop}%
		\bibitem [{\citenamefont {Wang}(2005)}]{wangBeatingPhotonNumberSplitting2005}%
		\BibitemOpen
		\bibfield  {author} {\bibinfo {author} {\bibfnamefont {X.-B.}\ \bibnamefont {Wang}},\ }\bibfield  {title} {\bibinfo {title} {Beating the {{Photon-Number-Splitting Attack}} in {{Practical Quantum Cryptography}}},\ }\href {https://doi.org/10.1103/PhysRevLett.94.230503} {\bibfield  {journal} {\bibinfo  {journal} {Phys. Rev. Lett.}\ }\textbf {\bibinfo {volume} {94}},\ \bibinfo {pages} {230503} (\bibinfo {year} {2005})}\BibitemShut {NoStop}%
		\bibitem [{\citenamefont {Navarrete}\ \emph {et~al.}(2021)\citenamefont {Navarrete}, \citenamefont {Pereira}, \citenamefont {Curty},\ and\ \citenamefont {Tamaki}}]{navarretePracticalQuantum2021}%
		\BibitemOpen
		\bibfield  {author} {\bibinfo {author} {\bibfnamefont {{\'A}.}~\bibnamefont {Navarrete}}, \bibinfo {author} {\bibfnamefont {M.}~\bibnamefont {Pereira}}, \bibinfo {author} {\bibfnamefont {M.}~\bibnamefont {Curty}},\ and\ \bibinfo {author} {\bibfnamefont {K.}~\bibnamefont {Tamaki}},\ }\bibfield  {title} {\bibinfo {title} {Practical {{Quantum Key Distribution That}} is {{Secure Against Side Channels}}},\ }\href {https://doi.org/10.1103/PhysRevApplied.15.034072} {\bibfield  {journal} {\bibinfo  {journal} {Phys. Rev. Applied}\ }\textbf {\bibinfo {volume} {15}},\ \bibinfo {pages} {034072} (\bibinfo {year} {2021})}\BibitemShut {NoStop}%
		\bibitem [{\citenamefont {Tupkary}\ \emph {et~al.}(2024)\citenamefont {Tupkary}, \citenamefont {Nahar}, \citenamefont {Sinha},\ and\ \citenamefont {L{\"u}tkenhaus}}]{tupkaryPhaseError2024}%
		\BibitemOpen
		\bibfield  {author} {\bibinfo {author} {\bibfnamefont {D.}~\bibnamefont {Tupkary}}, \bibinfo {author} {\bibfnamefont {S.}~\bibnamefont {Nahar}}, \bibinfo {author} {\bibfnamefont {P.}~\bibnamefont {Sinha}},\ and\ \bibinfo {author} {\bibfnamefont {N.}~\bibnamefont {L{\"u}tkenhaus}},\ }\href@noop {} {\bibinfo {title} {Phase error rate estimation in {{QKD}} with imperfect detectors}} (\bibinfo {year} {2024}),\ \Eprint {https://arxiv.org/abs/2408.17349} {arXiv:2408.17349} \BibitemShut {NoStop}%
		\bibitem [{\citenamefont {Nielsen}\ and\ \citenamefont {Chuang}(2011)}]{nielsenQuantumComputation2011}%
		\BibitemOpen
		\bibfield  {author} {\bibinfo {author} {\bibfnamefont {M.~A.}\ \bibnamefont {Nielsen}}\ and\ \bibinfo {author} {\bibfnamefont {I.~L.}\ \bibnamefont {Chuang}},\ }\href@noop {} {\emph {\bibinfo {title} {Quantum Computation and Quantum Information: 10th Anniversary Edition}}},\ \bibinfo {edition} {10th}\ ed.\ (\bibinfo  {publisher} {Cambridge University Press},\ \bibinfo {address} {USA},\ \bibinfo {year} {2011})\BibitemShut {NoStop}%
		\bibitem [{\citenamefont {{Curr{\'a}s-Lorenzo}}\ \emph {et~al.}(2021)\citenamefont {{Curr{\'a}s-Lorenzo}}, \citenamefont {Navarrete}, \citenamefont {Pereira},\ and\ \citenamefont {Tamaki}}]{curras-lorenzoFinitekeyAnalysis2021}%
		\BibitemOpen
		\bibfield  {author} {\bibinfo {author} {\bibfnamefont {G.}~\bibnamefont {{Curr{\'a}s-Lorenzo}}}, \bibinfo {author} {\bibfnamefont {{\'A}.}~\bibnamefont {Navarrete}}, \bibinfo {author} {\bibfnamefont {M.}~\bibnamefont {Pereira}},\ and\ \bibinfo {author} {\bibfnamefont {K.}~\bibnamefont {Tamaki}},\ }\bibfield  {title} {\bibinfo {title} {Finite-key analysis of loss-tolerant quantum key distribution based on random sampling theory},\ }\href {https://doi.org/10.1103/PhysRevA.104.012406} {\bibfield  {journal} {\bibinfo  {journal} {Phys. Rev. A}\ }\textbf {\bibinfo {volume} {104}},\ \bibinfo {pages} {012406} (\bibinfo {year} {2021})}\BibitemShut {NoStop}%
		\bibitem [{\citenamefont {Boyd}\ and\ \citenamefont {Vandenberghe}(2004)}]{Boyd2004}%
		\BibitemOpen
		\bibfield  {author} {\bibinfo {author} {\bibfnamefont {S.}~\bibnamefont {Boyd}}\ and\ \bibinfo {author} {\bibfnamefont {L.}~\bibnamefont {Vandenberghe}},\ }\href@noop {} {\emph {\bibinfo {title} {Convex Optimization}}}\ (\bibinfo  {publisher} {Cambridge University Press},\ \bibinfo {year} {2004})\BibitemShut {NoStop}%
		\bibitem [{\citenamefont {Pittaluga}\ \emph {et~al.}(2021)\citenamefont {Pittaluga}, \citenamefont {Minder}, \citenamefont {Lucamarini}, \citenamefont {Sanzaro}, \citenamefont {Woodward}, \citenamefont {Li}, \citenamefont {Yuan},\ and\ \citenamefont {Shields}}]{pittaluga600kmRepeaterlike2021}%
		\BibitemOpen
		\bibfield  {author} {\bibinfo {author} {\bibfnamefont {M.}~\bibnamefont {Pittaluga}}, \bibinfo {author} {\bibfnamefont {M.}~\bibnamefont {Minder}}, \bibinfo {author} {\bibfnamefont {M.}~\bibnamefont {Lucamarini}}, \bibinfo {author} {\bibfnamefont {M.}~\bibnamefont {Sanzaro}}, \bibinfo {author} {\bibfnamefont {R.~I.}\ \bibnamefont {Woodward}}, \bibinfo {author} {\bibfnamefont {M.-J.}\ \bibnamefont {Li}}, \bibinfo {author} {\bibfnamefont {Z.}~\bibnamefont {Yuan}},\ and\ \bibinfo {author} {\bibfnamefont {A.~J.}\ \bibnamefont {Shields}},\ }\bibfield  {title} {\bibinfo {title} {600-km repeater-like quantum communications with dual-band stabilization},\ }\href {https://doi.org/10.1038/s41566-021-00811-0} {\bibfield  {journal} {\bibinfo  {journal} {Nat. Photon.}\ }\textbf {\bibinfo {volume} {15}},\ \bibinfo {pages} {530} (\bibinfo {year} {2021})}\BibitemShut {NoStop}%
		\bibitem [{\citenamefont {Navarrete}\ and\ \citenamefont {Curty}(2022)}]{navarreteImprovedFiniteKey2022}%
		\BibitemOpen
		\bibfield  {author} {\bibinfo {author} {\bibfnamefont {{\'A}.}~\bibnamefont {Navarrete}}\ and\ \bibinfo {author} {\bibfnamefont {M.}~\bibnamefont {Curty}},\ }\bibfield  {title} {\bibinfo {title} {Improved finite-key security analysis of quantum key distribution against {{Trojan-horse}} attacks},\ }\href {https://doi.org/10.1088/2058-9565/ac74dc} {\bibfield  {journal} {\bibinfo  {journal} {Quantum Sci. Technol.}\ }\textbf {\bibinfo {volume} {7}},\ \bibinfo {pages} {035021} (\bibinfo {year} {2022})}\BibitemShut {NoStop}%
		\bibitem [{\citenamefont {Boileau}\ \emph {et~al.}(2005)\citenamefont {Boileau}, \citenamefont {Tamaki}, \citenamefont {Batuwantudawe}, \citenamefont {Laflamme},\ and\ \citenamefont {Renes}}]{boileauUnconditionalSecurity2005}%
		\BibitemOpen
		\bibfield  {author} {\bibinfo {author} {\bibfnamefont {J.-C.}\ \bibnamefont {Boileau}}, \bibinfo {author} {\bibfnamefont {K.}~\bibnamefont {Tamaki}}, \bibinfo {author} {\bibfnamefont {J.}~\bibnamefont {Batuwantudawe}}, \bibinfo {author} {\bibfnamefont {R.}~\bibnamefont {Laflamme}},\ and\ \bibinfo {author} {\bibfnamefont {J.~M.}\ \bibnamefont {Renes}},\ }\bibfield  {title} {\bibinfo {title} {Unconditional {{Security}} of a {{Three State Quantum Key Distribution Protocol}}},\ }\href {https://doi.org/10.1103/PhysRevLett.94.040503} {\bibfield  {journal} {\bibinfo  {journal} {Phys. Rev. Lett.}\ }\textbf {\bibinfo {volume} {94}},\ \bibinfo {pages} {040503} (\bibinfo {year} {2005})}\BibitemShut {NoStop}%
		\bibitem [{\citenamefont {Honjo}\ \emph {et~al.}(2004)\citenamefont {Honjo}, \citenamefont {Inoue},\ and\ \citenamefont {Takahashi}}]{honjoDifferentialphaseshiftQuantum2004}%
		\BibitemOpen
		\bibfield  {author} {\bibinfo {author} {\bibfnamefont {T.}~\bibnamefont {Honjo}}, \bibinfo {author} {\bibfnamefont {K.}~\bibnamefont {Inoue}},\ and\ \bibinfo {author} {\bibfnamefont {H.}~\bibnamefont {Takahashi}},\ }\bibfield  {title} {\bibinfo {title} {Differential-phase-shift quantum key distribution experiment with a planar light-wave circuit {{Mach}}--{{Zehnder}} interferometer},\ }\href {https://doi.org/10.1364/OL.29.002797} {\bibfield  {journal} {\bibinfo  {journal} {Opt. Lett.}\ }\textbf {\bibinfo {volume} {29}},\ \bibinfo {pages} {2797} (\bibinfo {year} {2004})}\BibitemShut {NoStop}%
		\bibitem [{\citenamefont {Xu}\ \emph {et~al.}(2015)\citenamefont {Xu}, \citenamefont {Wei}, \citenamefont {Sajeed}, \citenamefont {Kaiser}, \citenamefont {Sun}, \citenamefont {Tang}, \citenamefont {Qian}, \citenamefont {Makarov},\ and\ \citenamefont {Lo}}]{xuExperimentalQuantum2015}%
		\BibitemOpen
		\bibfield  {author} {\bibinfo {author} {\bibfnamefont {F.}~\bibnamefont {Xu}}, \bibinfo {author} {\bibfnamefont {K.}~\bibnamefont {Wei}}, \bibinfo {author} {\bibfnamefont {S.}~\bibnamefont {Sajeed}}, \bibinfo {author} {\bibfnamefont {S.}~\bibnamefont {Kaiser}}, \bibinfo {author} {\bibfnamefont {S.}~\bibnamefont {Sun}}, \bibinfo {author} {\bibfnamefont {Z.}~\bibnamefont {Tang}}, \bibinfo {author} {\bibfnamefont {L.}~\bibnamefont {Qian}}, \bibinfo {author} {\bibfnamefont {V.}~\bibnamefont {Makarov}},\ and\ \bibinfo {author} {\bibfnamefont {H.-K.}\ \bibnamefont {Lo}},\ }\bibfield  {title} {\bibinfo {title} {Experimental quantum key distribution with source flaws},\ }\href {https://doi.org/10.1103/PhysRevA.92.032305} {\bibfield  {journal} {\bibinfo  {journal} {Phys. Rev. A}\ }\textbf {\bibinfo {volume} {92}},\ \bibinfo {pages} {032305} (\bibinfo {year} {2015})}\BibitemShut {NoStop}%
		\bibitem [{\citenamefont {Vakhitov}\ \emph {et~al.}(2001)\citenamefont {Vakhitov}, \citenamefont {Makarov},\ and\ \citenamefont {Hjelme}}]{vakhitovLargePulse2001}%
		\BibitemOpen
		\bibfield  {author} {\bibinfo {author} {\bibfnamefont {A.}~\bibnamefont {Vakhitov}}, \bibinfo {author} {\bibfnamefont {V.}~\bibnamefont {Makarov}},\ and\ \bibinfo {author} {\bibfnamefont {D.~R.}\ \bibnamefont {Hjelme}},\ }\bibfield  {title} {\bibinfo {title} {Large pulse attack as a method of conventional optical eavesdropping in quantum cryptography},\ }\href {https://doi.org/10.1080/09500340108240904} {\bibfield  {journal} {\bibinfo  {journal} {J. Mod. Opt.}\ }\textbf {\bibinfo {volume} {48}},\ \bibinfo {pages} {2023} (\bibinfo {year} {2001})}\BibitemShut {NoStop}%
		\bibitem [{\citenamefont {Gisin}\ \emph {et~al.}(2006)\citenamefont {Gisin}, \citenamefont {Fasel}, \citenamefont {Kraus}, \citenamefont {Zbinden},\ and\ \citenamefont {Ribordy}}]{gisinTrojanhorseAttacks2006}%
		\BibitemOpen
		\bibfield  {author} {\bibinfo {author} {\bibfnamefont {N.}~\bibnamefont {Gisin}}, \bibinfo {author} {\bibfnamefont {S.}~\bibnamefont {Fasel}}, \bibinfo {author} {\bibfnamefont {B.}~\bibnamefont {Kraus}}, \bibinfo {author} {\bibfnamefont {H.}~\bibnamefont {Zbinden}},\ and\ \bibinfo {author} {\bibfnamefont {G.}~\bibnamefont {Ribordy}},\ }\bibfield  {title} {\bibinfo {title} {Trojan-horse attacks on quantum-key-distribution systems},\ }\href {https://doi.org/10.1103/PhysRevA.73.022320} {\bibfield  {journal} {\bibinfo  {journal} {Phys. Rev. A}\ }\textbf {\bibinfo {volume} {73}},\ \bibinfo {pages} {022320} (\bibinfo {year} {2006})}\BibitemShut {NoStop}%
		\bibitem [{\citenamefont {Lucamarini}\ \emph {et~al.}(2015)\citenamefont {Lucamarini}, \citenamefont {Choi}, \citenamefont {Ward}, \citenamefont {Dynes}, \citenamefont {Yuan},\ and\ \citenamefont {Shields}}]{lucamariniPracticalSecurity2015}%
		\BibitemOpen
		\bibfield  {author} {\bibinfo {author} {\bibfnamefont {M.}~\bibnamefont {Lucamarini}}, \bibinfo {author} {\bibfnamefont {I.}~\bibnamefont {Choi}}, \bibinfo {author} {\bibfnamefont {M.~B.}\ \bibnamefont {Ward}}, \bibinfo {author} {\bibfnamefont {J.~F.}\ \bibnamefont {Dynes}}, \bibinfo {author} {\bibfnamefont {Z.~L.}\ \bibnamefont {Yuan}},\ and\ \bibinfo {author} {\bibfnamefont {A.~J.}\ \bibnamefont {Shields}},\ }\bibfield  {title} {\bibinfo {title} {Practical {{Security Bounds Against}} the {{Trojan-Horse Attack}} in {{Quantum Key Distribution}}},\ }\href {https://doi.org/10.1103/PhysRevX.5.031030} {\bibfield  {journal} {\bibinfo  {journal} {Phys. Rev. X}\ }\textbf {\bibinfo {volume} {5}},\ \bibinfo {pages} {031030} (\bibinfo {year} {2015})}\BibitemShut {NoStop}%
		\bibitem [{\citenamefont {Sixto}\ \emph {et~al.}(2024)\citenamefont {Sixto}, \citenamefont {Navarrete}, \citenamefont {Pereira}, \citenamefont {{Curr{\'a}s-Lorenzo}}, \citenamefont {Tamaki},\ and\ \citenamefont {Curty}}]{sixtoQuantumKey2024}%
		\BibitemOpen
		\bibfield  {author} {\bibinfo {author} {\bibfnamefont {X.}~\bibnamefont {Sixto}}, \bibinfo {author} {\bibfnamefont {{\'A}.}~\bibnamefont {Navarrete}}, \bibinfo {author} {\bibfnamefont {M.}~\bibnamefont {Pereira}}, \bibinfo {author} {\bibfnamefont {G.}~\bibnamefont {{Curr{\'a}s-Lorenzo}}}, \bibinfo {author} {\bibfnamefont {K.}~\bibnamefont {Tamaki}},\ and\ \bibinfo {author} {\bibfnamefont {M.}~\bibnamefont {Curty}},\ }\href@noop {} {\bibinfo {title} {Quantum key distribution with imperfectly isolated devices}} (\bibinfo {year} {2024}),\ \Eprint {https://arxiv.org/abs/2411.13948} {arXiv:2411.13948 [quant-ph]} \BibitemShut {NoStop}%
		\bibitem [{\citenamefont {Tomamichel}\ \emph {et~al.}(2012)\citenamefont {Tomamichel}, \citenamefont {Lim}, \citenamefont {Gisin},\ and\ \citenamefont {Renner}}]{tomamichelTightFinitekey2012}%
		\BibitemOpen
		\bibfield  {author} {\bibinfo {author} {\bibfnamefont {M.}~\bibnamefont {Tomamichel}}, \bibinfo {author} {\bibfnamefont {C.~C.~W.}\ \bibnamefont {Lim}}, \bibinfo {author} {\bibfnamefont {N.}~\bibnamefont {Gisin}},\ and\ \bibinfo {author} {\bibfnamefont {R.}~\bibnamefont {Renner}},\ }\bibfield  {title} {\bibinfo {title} {Tight finite-key analysis for quantum cryptography},\ }\href {https://doi.org/10.1038/ncomms1631} {\bibfield  {journal} {\bibinfo  {journal} {Nat Commun}\ }\textbf {\bibinfo {volume} {3}},\ \bibinfo {pages} {634} (\bibinfo {year} {2012})}\BibitemShut {NoStop}%
		\bibitem [{\citenamefont {Koashi}(2009)}]{koashiSimpleSecurity2009}%
		\BibitemOpen
		\bibfield  {author} {\bibinfo {author} {\bibfnamefont {M.}~\bibnamefont {Koashi}},\ }\bibfield  {title} {\bibinfo {title} {Simple security proof of quantum key distribution based on complementarity},\ }\href {https://doi.org/10.1088/1367-2630/11/4/045018} {\bibfield  {journal} {\bibinfo  {journal} {New J. Phys.}\ }\textbf {\bibinfo {volume} {11}},\ \bibinfo {pages} {045018} (\bibinfo {year} {2009})}\BibitemShut {NoStop}%
		\bibitem [{\citenamefont {George}\ \emph {et~al.}(2021)\citenamefont {George}, \citenamefont {Lin},\ and\ \citenamefont {L{\"u}tkenhaus}}]{georgeNumericalCalculations2021}%
		\BibitemOpen
		\bibfield  {author} {\bibinfo {author} {\bibfnamefont {I.}~\bibnamefont {George}}, \bibinfo {author} {\bibfnamefont {J.}~\bibnamefont {Lin}},\ and\ \bibinfo {author} {\bibfnamefont {N.}~\bibnamefont {L{\"u}tkenhaus}},\ }\bibfield  {title} {\bibinfo {title} {Numerical calculations of the finite key rate for general quantum key distribution protocols},\ }\href {https://doi.org/10.1103/PhysRevResearch.3.013274} {\bibfield  {journal} {\bibinfo  {journal} {Phys. Rev. Research}\ }\textbf {\bibinfo {volume} {3}},\ \bibinfo {pages} {013274} (\bibinfo {year} {2021})}\BibitemShut {NoStop}%
		\bibitem [{\citenamefont {Bunandar}\ \emph {et~al.}(2020)\citenamefont {Bunandar}, \citenamefont {Govia}, \citenamefont {Krovi},\ and\ \citenamefont {Englund}}]{bunandarNumericalFinitekey2020}%
		\BibitemOpen
		\bibfield  {author} {\bibinfo {author} {\bibfnamefont {D.}~\bibnamefont {Bunandar}}, \bibinfo {author} {\bibfnamefont {L.~C.~G.}\ \bibnamefont {Govia}}, \bibinfo {author} {\bibfnamefont {H.}~\bibnamefont {Krovi}},\ and\ \bibinfo {author} {\bibfnamefont {D.}~\bibnamefont {Englund}},\ }\bibfield  {title} {\bibinfo {title} {Numerical finite-key analysis of quantum key distribution},\ }\href {https://doi.org/10.1038/s41534-020-00322-w} {\bibfield  {journal} {\bibinfo  {journal} {npj Quantum Inf}\ }\textbf {\bibinfo {volume} {6}},\ \bibinfo {pages} {1} (\bibinfo {year} {2020})}\BibitemShut {NoStop}%
		\bibitem [{\citenamefont {Metger}\ and\ \citenamefont {Renner}(2023)}]{metgerSecurityQuantum2023}%
		\BibitemOpen
		\bibfield  {author} {\bibinfo {author} {\bibfnamefont {T.}~\bibnamefont {Metger}}\ and\ \bibinfo {author} {\bibfnamefont {R.}~\bibnamefont {Renner}},\ }\bibfield  {title} {\bibinfo {title} {Security of quantum key distribution from generalised entropy accumulation},\ }\href {https://doi.org/10.1038/s41467-023-40920-8} {\bibfield  {journal} {\bibinfo  {journal} {Nat Commun}\ }\textbf {\bibinfo {volume} {14}},\ \bibinfo {pages} {5272} (\bibinfo {year} {2023})}\BibitemShut {NoStop}%
		\bibitem [{\citenamefont {Christandl}\ \emph {et~al.}(2009)\citenamefont {Christandl}, \citenamefont {K{\"o}nig},\ and\ \citenamefont {Renner}}]{christandlPostselectionTechnique2009}%
		\BibitemOpen
		\bibfield  {author} {\bibinfo {author} {\bibfnamefont {M.}~\bibnamefont {Christandl}}, \bibinfo {author} {\bibfnamefont {R.}~\bibnamefont {K{\"o}nig}},\ and\ \bibinfo {author} {\bibfnamefont {R.}~\bibnamefont {Renner}},\ }\bibfield  {title} {\bibinfo {title} {Postselection {{Technique}} for {{Quantum Channels}} with {{Applications}} to {{Quantum Cryptography}}},\ }\href {https://doi.org/10.1103/PhysRevLett.102.020504} {\bibfield  {journal} {\bibinfo  {journal} {Phys. Rev. Lett.}\ }\textbf {\bibinfo {volume} {102}},\ \bibinfo {pages} {020504} (\bibinfo {year} {2009})}\BibitemShut {NoStop}%
		\bibitem [{\citenamefont {Nahar}\ \emph {et~al.}(2024)\citenamefont {Nahar}, \citenamefont {Tupkary}, \citenamefont {Zhao}, \citenamefont {L{\"u}tkenhaus},\ and\ \citenamefont {Tan}}]{naharPostselectionTechnique2024}%
		\BibitemOpen
		\bibfield  {author} {\bibinfo {author} {\bibfnamefont {S.}~\bibnamefont {Nahar}}, \bibinfo {author} {\bibfnamefont {D.}~\bibnamefont {Tupkary}}, \bibinfo {author} {\bibfnamefont {Y.}~\bibnamefont {Zhao}}, \bibinfo {author} {\bibfnamefont {N.}~\bibnamefont {L{\"u}tkenhaus}},\ and\ \bibinfo {author} {\bibfnamefont {E.~Y.-Z.}\ \bibnamefont {Tan}},\ }\bibfield  {title} {\bibinfo {title} {Postselection {{Technique}} for {{Optical Quantum Key Distribution}} with {{Improved}} de {{Finetti Reductions}}},\ }\href {https://doi.org/10.1103/PRXQuantum.5.040315} {\bibfield  {journal} {\bibinfo  {journal} {PRX Quantum}\ }\textbf {\bibinfo {volume} {5}},\ \bibinfo {pages} {040315} (\bibinfo {year} {2024})}\BibitemShut {NoStop}%
		\bibitem [{\citenamefont {Wang}\ \emph {et~al.}(2020)\citenamefont {Wang}, \citenamefont {Yin}, \citenamefont {Lu}, \citenamefont {Wang}, \citenamefont {Chen}, \citenamefont {Zhang}, \citenamefont {Huang}, \citenamefont {Xu}, \citenamefont {Guo},\ and\ \citenamefont {Han}}]{wang2019twinfield}%
		\BibitemOpen
		\bibfield  {author} {\bibinfo {author} {\bibfnamefont {R.}~\bibnamefont {Wang}}, \bibinfo {author} {\bibfnamefont {Z.-Q.}\ \bibnamefont {Yin}}, \bibinfo {author} {\bibfnamefont {F.-Y.}\ \bibnamefont {Lu}}, \bibinfo {author} {\bibfnamefont {S.}~\bibnamefont {Wang}}, \bibinfo {author} {\bibfnamefont {W.}~\bibnamefont {Chen}}, \bibinfo {author} {\bibfnamefont {C.-M.}\ \bibnamefont {Zhang}}, \bibinfo {author} {\bibfnamefont {W.}~\bibnamefont {Huang}}, \bibinfo {author} {\bibfnamefont {B.-J.}\ \bibnamefont {Xu}}, \bibinfo {author} {\bibfnamefont {G.-C.}\ \bibnamefont {Guo}},\ and\ \bibinfo {author} {\bibfnamefont {Z.-F.}\ \bibnamefont {Han}},\ }\bibfield  {title} {\bibinfo {title} {Optimized protocol for twin-field quantum key distribution},\ }\href {https://doi.org/10.1038/s42005-020-00415-0} {\bibfield  {journal} {\bibinfo  {journal} {Communications Physics}\ }\textbf {\bibinfo {volume} {3}},\ \bibinfo {pages} {149} (\bibinfo {year} {2020})}\BibitemShut {NoStop}%
		\bibitem [{\citenamefont {Curty}\ \emph {et~al.}(2019)\citenamefont {Curty}, \citenamefont {Azuma},\ and\ \citenamefont {Lo}}]{curtySimpleSecurity2019}%
		\BibitemOpen
		\bibfield  {author} {\bibinfo {author} {\bibfnamefont {M.}~\bibnamefont {Curty}}, \bibinfo {author} {\bibfnamefont {K.}~\bibnamefont {Azuma}},\ and\ \bibinfo {author} {\bibfnamefont {H.-K.}\ \bibnamefont {Lo}},\ }\bibfield  {title} {\bibinfo {title} {Simple security proof of twin-field type quantum key distribution protocol},\ }\href {https://doi.org/10.1038/s41534-019-0175-6} {\bibfield  {journal} {\bibinfo  {journal} {Npj Quantum Inf.}\ }\textbf {\bibinfo {volume} {5}},\ \bibinfo {pages} {64} (\bibinfo {year} {2019})}\BibitemShut {NoStop}%
		\bibitem [{\citenamefont {Zeng}\ \emph {et~al.}(2022)\citenamefont {Zeng}, \citenamefont {Zhou}, \citenamefont {Wu},\ and\ \citenamefont {Ma}}]{zengModepairingQuantum2022}%
		\BibitemOpen
		\bibfield  {author} {\bibinfo {author} {\bibfnamefont {P.}~\bibnamefont {Zeng}}, \bibinfo {author} {\bibfnamefont {H.}~\bibnamefont {Zhou}}, \bibinfo {author} {\bibfnamefont {W.}~\bibnamefont {Wu}},\ and\ \bibinfo {author} {\bibfnamefont {X.}~\bibnamefont {Ma}},\ }\bibfield  {title} {\bibinfo {title} {Mode-pairing quantum key distribution},\ }\href {https://doi.org/10.1038/s41467-022-31534-7} {\bibfield  {journal} {\bibinfo  {journal} {Nat Commun}\ }\textbf {\bibinfo {volume} {13}},\ \bibinfo {pages} {3903} (\bibinfo {year} {2022})}\BibitemShut {NoStop}%
		\bibitem [{\citenamefont {Xie}\ \emph {et~al.}(2022)\citenamefont {Xie}, \citenamefont {Lu}, \citenamefont {Weng}, \citenamefont {Cao}, \citenamefont {Jia}, \citenamefont {Bao}, \citenamefont {Wang}, \citenamefont {Fu}, \citenamefont {Yin},\ and\ \citenamefont {Chen}}]{xieBreakingRateLoss2022}%
		\BibitemOpen
		\bibfield  {author} {\bibinfo {author} {\bibfnamefont {Y.-M.}\ \bibnamefont {Xie}}, \bibinfo {author} {\bibfnamefont {Y.-S.}\ \bibnamefont {Lu}}, \bibinfo {author} {\bibfnamefont {C.-X.}\ \bibnamefont {Weng}}, \bibinfo {author} {\bibfnamefont {X.-Y.}\ \bibnamefont {Cao}}, \bibinfo {author} {\bibfnamefont {Z.-Y.}\ \bibnamefont {Jia}}, \bibinfo {author} {\bibfnamefont {Y.}~\bibnamefont {Bao}}, \bibinfo {author} {\bibfnamefont {Y.}~\bibnamefont {Wang}}, \bibinfo {author} {\bibfnamefont {Y.}~\bibnamefont {Fu}}, \bibinfo {author} {\bibfnamefont {H.-L.}\ \bibnamefont {Yin}},\ and\ \bibinfo {author} {\bibfnamefont {Z.-B.}\ \bibnamefont {Chen}},\ }\bibfield  {title} {\bibinfo {title} {Breaking the {{Rate-Loss Bound}} of {{Quantum Key Distribution}} with {{Asynchronous Two-Photon Interference}}},\ }\href {https://doi.org/10.1103/PRXQuantum.3.020315} {\bibfield  {journal} {\bibinfo  {journal} {PRX Quantum}\ }\textbf {\bibinfo {volume} {3}},\ \bibinfo {pages} {020315} (\bibinfo {year} {2022})}\BibitemShut {NoStop}%
	\end{thebibliography}
	\end{document}